\documentclass[a4paper,11pt]{article}

\usepackage{tcs}
\makeatletter
\newcommand{\fallbackcref}[1]{\expandafter\fallbackcref@dispatch#1:\@nil{#1}}
\def\fallbackcref@dispatch#1:#2\@nil#3{%
    \@ifundefined{fallbackcref@#1}{\ref{#3}}{\csname fallbackcref@#1\endcsname{#3}}%
}
\def\fallbackcref@thm#1{Theorem~\ref{#1}}
\def\fallbackcref@theorem#1{Theorem~\ref{#1}}
\def\fallbackcref@conj#1{Conjecture~\ref{#1}}
\def\fallbackcref@lemma#1{Lemma~\ref{#1}}
\def\fallbackcref@lem#1{Lemma~\ref{#1}}
\def\fallbackcref@prop#1{Proposition~\ref{#1}}
\def\fallbackcref@cor#1{Corollary~\ref{#1}}
\def\fallbackcref@def#1{Definition~\ref{#1}}
\def\fallbackcref@eq#1{Equation~\eqref{#1}}
\def\fallbackcref@sec#1{Section~\ref{#1}}
\newcommand{\installfallbackcref}{%
    \renewcommand{\cref}[1]{\fallbackcref{##1}}%
    \renewcommand{\Cref}[1]{\fallbackcref{##1}}%
}
\makeatother

\IfFileExists{cleveref.sty}{
    \crefname{theorem}{Theorem}{Theorems}
    \Crefname{theorem}{Theorem}{Theorems}
    \crefname{thm}{Theorem}{Theorems}
    \Crefname{thm}{Theorem}{Theorems}
    \crefname{conjecture}{Conjecture}{Conjectures}
    \Crefname{conjecture}{Conjecture}{Conjectures}
    \crefname{definition}{Definition}{Definitions}
    \Crefname{definition}{Definition}{Definitions}
    \crefname{lemma}{Lemma}{Lemmas}
    \Crefname{lemma}{Lemma}{Lemmas}
    \crefname{proposition}{Proposition}{Propositions}
    \Crefname{proposition}{Proposition}{Propositions}
    \crefname{corollary}{Corollary}{Corollaries}
    \Crefname{corollary}{Corollary}{Corollaries}
    \crefname{equation}{Equation}{Equations}
    \Crefname{equation}{Equation}{Equations}
}{
    \installfallbackcref
}

\begin{document}

\title{Sharp Bounds on the Eigenvalues of Kikuchi Graphs \\and Applications to Quantum Max Cut}
\date{}
\author{
    \begin{tabular}{cc}
        Ainesh Bakshi & Arpon Basu \\
        \texttt{ainesh@nyu.edu} & \texttt{arpon.basu@princeton.edu} \\
        NYU & Princeton \\
        \\[-0.25em]
        Pravesh Kothari & Anqi Li \\
        \texttt{kothari@cs.princeton.edu} & \texttt{aqli@stanford.edu} \\
        Princeton & Stanford
    \end{tabular}
}

\maketitle
\thispagestyle{empty}

\abstract{
We prove that the maximum eigenvalue of the (both signed and unsigned) Laplacian of level
$k$ Kikuchi graph of any graph $G$ with $m$ edges is at most $m+k$. This confirms four
recent conjectures of Apte, Parekh, and Sud. 

As applications, we obtain that tensor products of one and two qubit product states achieve
an approximation ratio of $5/8$ for Quantum Max Cut and $5/7$ for the XY Hamiltonian.
Moreover, combining our bounds with the algorithms analyzed by Apte, Parekh, and Sud, yields
efficient algorithms achieving an approximation ratio of $0.614$ for Quantum Max Cut and
$0.674$ for the XY Hamiltonian. Finally, we also make modest progress on Brouwer's conjecture
and improve Lew's bound on the sum of the top-$k$ eigenvalues of a Graph Laplacian.
}

\newpage
\setcounter{page}{1}

\section{Introduction}


Let $G$ be a graph on $[n]$ with $m$ edges. 
For $0\leq k\leq n$, the level-$k$ \emph{Kikuchi graph} of $G$, denoted $F_k(G)$, has vertex set
$\binom{[n]}{k}$. Two $k$-sets $S,T \in {[n] \choose k}$ are adjacent precisely when
their symmetric difference is an edge of $G$, i.e. $S\oplus T = \{i,j\}$ for some edge 
$\{i,j\}$ of $G$. In the combinatorics literature, the Kikuchi graph is often called the \emph{token graph},
which can be interpreted as the graph of configurations of $k$ indistinguishable tokens placed on 
distinct vertices of $G$. A valid move between two configurations is to slide a \emph{token} across
an edge of $G$ to an unoccupied vertex.  
The Kikuchi perspective is useful because it turns questions about high-dimensional incidence
structure into questions about walks, eigenvalues, and cycles in an ordinary graph. This
viewpoint has led to the proof of the Hypergraph Moore
Bound conjectured by Feige~\cite{Feige08} and refutation algorithms for random constraint satisfaction
~\cite{GuruswamiKM22, HsiehKM23}, as well as exponential lower bounds for locally
correctable codes~\cite{KothariM23}. 

Yet another connection emerges from the quantum information literature, where Apte, Parekh, and
Sud~\cite{AptePS25} observed that the maximum energies of several well-studied $2$-local Hamiltonians 
can be expressed in terms of extremal eigenvalues of Kikuchi graphs. In particular, after decomposing
the $n$ qubit Hilbert space into low Hamming weight sectors, the action of these Hamiltonians on the 
weight $k$ sector is determined by the adjacency and Laplacian matrices associated to the level 
$k$ Kikuchi graph. Thus, sharp spectral bounds for Kikuchi graphs translate directly into 
sharp upper bounds on the maximum energies of these local Hamiltonians. This observation led Apte, 
Parekh, and Sud to propose a set of sharp spectral conjectures for Kikuchi graphs:

\begin{conjecture}[Apte--Parekh--Sud Conjectures~\cite{AptePS25}]
\label{conj:aps}
Given a graph $G = ([n], E)$, let $F_k(G)$ be the associated level $k$ Kikuchi graph. Let $A^{(k)}_G$ be
the adjacency matrix of $F_k(G)$, and let $L^{(k)}_G$ and $Q^{(k)}_G$ be the signed and unsigned 
Laplacian respectively. Then, for every graph $G=([n],E)$ with $m=\abs{E}$, and every $0\leq k\leq n$,
\[
\begin{array}{r@{\;}c@{\;}l@{\hspace{1.2em}}l@{\qquad\qquad}r@{\;}c@{\;}l@{\hspace{1.2em}}l}
    \lambda_{\max}(L^{(k)}_G) &\leq& m + k
        & \textup{(C1)}
        &
    \lambda_{\max}(Q^{(k)}_G) &\leq& m + k
        & \textup{(C2)}
    \\[0.5em]
    \lambda_{\max}(A^{(k)}_G) &\leq& \frac{1}{2}\Paren{m + k}
        & \textup{(C3)}
        &
    \lambda_{\min}(A^{(k)}_G) &\geq& -\frac{1}{2}\Paren{m + k}
        & \textup{(C4)}.
\end{array}
\]
\end{conjecture}

In this paper, we resolve the four APS conjectures \textup{(C1)}--\textup{(C4)}. 
\begin{theorem}
\label{thm:mainthm}
    For any graph $G = ([n], E)$ with $|E| = m$ edges, and any $0\leq k\leq n$, we have 
    \[
        -(m + k)\leq -\lambda_{\max}(L^{(k)}_G)
        \leq 2\lambda_{\min}(A^{(k)}_G)
        \leq 2\lambda_{\max}(A^{(k)}_G)
        \leq \lambda_{\max}(Q^{(k)}_G)\leq m + k.
    \]
\end{theorem}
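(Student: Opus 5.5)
The plan is to separate the two outer bounds, which carry the content, from the three middle inequalities, which are formal. Write $D^{(k)}$ for the diagonal degree matrix of $F_k(G)$, so that $L^{(k)}_G=D^{(k)}-A^{(k)}_G$ and $Q^{(k)}_G=D^{(k)}+A^{(k)}_G$. Both $L^{(k)}_G$ and $Q^{(k)}_G$ are positive semidefinite, since each is a sum over Kikuchi edges $\{S,T\}$ of $(e_S\mp e_T)(e_S\mp e_T)^\top$.

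For the middle chain, take a unit eigenvector $v$ of $A^{(k)}_G$ with eigenvalue $\lambda_{\min}$. Positivity of $Q^{(k)}_G$ gives $v^\top D^{(k)}v\ge -v^\top A^{(k)}_Gv$, hence
\[
\lambda_{\max}(L^{(k)}_G)\ \ge\ v^\top D^{(k)}v-\lambda_{\min}\ \ge\ -2\lambda_{\min}(A^{(k)}_G).
\]
Symmetrically, for a top eigenvector $u$ of $A^{(k)}_G$, positivity of $L^{(k)}_G$ gives $u^\top D^{(k)}u\ge u^\top A^{(k)}_G u$, and therefore $\lambda_{\max}(Q^{(k)}_G)\ge 2\lambda_{\max}(A^{(k)}_G)$. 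The inequality $\lambda_{\min}\le\lambda_{\max}$ is trivial. It therefore remains to prove $\lambda_{\max}(L^{(k)}_G)\le m+k$ and $\lambda_{\max}(Q^{(k)}_G)\le m+k$.

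For the outer bounds I will move to the qubit picture of Apte--Parekh--Sud. Identify $\mathbb{R}^{\binom{[n]}{k}}$ with the Hamming-weight-$k$ sector of $(\mathbb{C}^2)^{\otimes n}$. For an edge $ij$, the Kikuchi adjacency contribution is $(X_iX_j+Y_iY_j)/2$ and the degree contribution is $(I-Z_iZ_j)/2$. Hence
\[
L^{(k)}_G=\sum_{ij\in E}\bigl(I-\mathrm{SWAP}_{ij}\bigr),\qquad Q^{(k)}_G=\sum_{ij\in E}\bigl(\mathrm{SWAP}_{ij}-Z_iZ_j\bigr),
\]
both restricted to weight $k$. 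The two are related by conjugation with $Z$ on one side of a bipartition only when $G$ is bipartite, so they must be handled separately, but by the same scheme. Since $F_k(G)\cong F_{n-k}(G)$ (complementing sets), we may assume $k\le n/2$ when convenient.

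I will prove both bounds by induction on $n$, peeling off a vertex $v$ with neighbourhood $N(v)$ of size $d$. Split the weight-$k$ sector according to the state of qubit $v$: the $\ket{0}_v$ part is the weight-$k$ sector of $G-v$, and the $\ket{1}_v$ part is the weight-$(k-1)$ sector of $G-v$. The Hamiltonian decomposes as $H_{G-v}\oplus H_{G-v}$ (block diagonal) plus the star term $H_{\mathrm{star}(v)}$. The star term has diagonal parts, namely the $Z_vZ_j$ and degree terms, and off-diagonal hopping parts between the two blocks.

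Naive subadditivity would give $(m-d+k)+\lambda_{\max}(H_{\mathrm{star}})$, which is too lossy. For instance, a star with $d$ edges has top Heisenberg eigenvalue $d+1$, so this yields $m+k+1$. The refinement is to use the induction hypothesis on the two blocks with the correct weights: $m-d+k$ on the $\ket0_v$ block and $m-d+k-1$ on the $\ket1_v$ block. One then bounds the star contribution on a vector $\psi=\psi_0\oplus\psi_1$ by $d\|\psi\|^2+\|\psi_1\|^2$ minus a correction. The hopping term $\sum_{j\in N(v)}(\sigma^+_v\sigma^-_j+\mathrm{h.c.})$ is to be controlled by Cauchy--Schwarz against the number operator $\sum_{j\in N(v)}n_j$ restricted to the relevant blocks. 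The $\ket1_v$ block pays exactly the extra $+1$ that the $\ket0_v$ block saves.

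The main obstacle is making this accounting tight, so that the $\pm1$ slack between the blocks exactly absorbs the hopping cross term. I expect it may be necessary to strengthen the induction hypothesis, for example to an operator inequality of the form $H_G\preceq m+\hat N$ on all sectors simultaneously, where $\hat N=\sum_i n_i$ is the weight operator, or even with vertex-dependent weights. If that fails, my fallback is to choose $v$ of minimum degree, or to do the induction on edges, and to first verify the statement for stars and triangles, where it must be tight.
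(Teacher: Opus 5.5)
Your handling of the middle inequalities via $L^{(k)}_G,Q^{(k)}_G\succeq 0$ is correct. So is your qubit dictionary, $L^{(k)}_G=\sum_{ij}(I-\mathrm{SWAP}_{ij})$ and $Q^{(k)}_G=\sum_{ij}(\mathrm{SWAP}_{ij}-Z_iZ_j)$ on the weight-$k$ sector. The gap is the outer bound $\lambda_{\max}\le m+k$, which is the whole content of the theorem. You do not prove it, and the mechanism you sketch cannot prove it.

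Suppose the induction hypothesis enters only through the scalar bounds $m-d+k$ and $m-d+k-1$ on the $|0\rangle_v$ and $|1\rangle_v$ blocks. Then closing the induction requires the operator inequality $H_{\mathrm{star}(v)}\preceq d\,P_0+(d+1)\,P_1$, and this fails for every $d\ge 1$ already at $k=1$. Restrict to the centre state $|v\rangle$ and the uniform leaf state $d^{-1/2}\sum_{j\in N(v)}|j\rangle$. This span is invariant, and on it the slack $d\,P_0+(d+1)\,P_1-H_{\mathrm{star}(v)}$ equals $\begin{pmatrix}1&\sqrt d\\ \sqrt d&d-1\end{pmatrix}$, which has determinant $-1$.

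For a single edge ($n=2$, $k=1$, $d=1$) this means $H=\begin{pmatrix}1&-1\\-1&1\end{pmatrix}$ on $(|10\rangle,|01\rangle)$. The block bounds are $1$ and $0$, so the best your scheme can certify is $\lambda_{\max}\begin{pmatrix}1&-1\\-1&2\end{pmatrix}=(3+\sqrt5)/2>2=m+k$. The hopping cross term can reach $2\sqrt d\,\|\psi_0\|\|\psi_1\|$, and diagonal slacks of $1$ and $d-1$ cannot absorb it since $1\cdot(d-1)<d$. No arrangement of Cauchy--Schwarz fixes this. Peeling a minimum-degree vertex does not help either, because the obstruction is already present at $d=1$. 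Finally, $H_G\preceq m+\hat N$ on all sectors is not a strengthening: $H_G$ commutes with $\hat N$, so it is just the theorem for all $k$ at once.

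The missing idea is to run the induction on derivatives of the eigenvector rather than on the eigenvector itself. Write $L^{(k)}_{b,G}=D^{(k)}_G+bA^{(k)}_G$, with $b=-1$ for $L$ and $b=1$ for $Q$. For an oriented edge $e=(i,j)$ and $R\in\binom{[n]\setminus\{i,j\}}{k-1}$, set $g_e(R)=x_{R\cup\{i\}}+b\,x_{R\cup\{j\}}$. Applying this difference to $L^{(k)}_{b,G}x=\lambda x$ gives
\[
\lambda g_e=\bigl(2I+L^{(k-1)}_{b,G-\{i,j\}}\bigr)g_e+\sum_{f:\,|f\cap e|=1}\sigma_{e,f}\,g_f\circ\pi_{e,f},
\]
with $\sigma_{e,f}\in\{\pm1\}$ and $\pi_{e,f}$ a bijection of configurations.

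Edges disjoint from $e$ assemble into the level-$(k-1)$ operator of $G-\{i,j\}$, which has $m-d_i-d_j+1$ edges. Each adjacent edge contributes a signed relabelling, hence a norm-preserving copy, of a neighbouring derivative. Now take unsquared $\ell_2$ norms, apply the induction hypothesis and the triangle inequality, and sum over $e$. Each $\|g_f\|_2$ with $f=\{u,w\}$ then receives coefficient $2+(m-d_u-d_w+1)+(k-1)+(d_u+d_w-2)=m+k$. Hence $\lambda\le m+k$, unless all $g_e$ vanish, in which case $\lambda=0$.

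This is exactly the cancellation your vertex-peeling was aiming for: the edges lost in passing to $G-\{u,w\}$ are paid back by the adjacent twist terms. It works because the cross terms appear as norms of other derivatives, not as an off-diagonal coupling that must be absorbed into diagonal slack.
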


Prior to this work, the best known general bound in this direction was Lew's estimate 
for the unsigned Laplacian: $\lambda_{\max}(L_G^{(k)})\leq m+4k-2$~\cite{Lew26}. Further, 
each of the bounds in \cref{thm:mainthm} are tight. It is easy to check that a disjoint union of 
$k$ stars saturates the bounds for the largest eigenvalue of the signed and unsigned Laplacian. 
For the adjacency matrix, consider the special case of a disjoint union of $k$ edges, i.e. $G$ 
is a perfect matching on $2k$ vertices. Then, the level $k$ Kikuchi graph has one token on each base 
edge, and the configurations can be represented as bit strings on $\braces{0,1}^k$, denoting whether 
the token is on the right or left vertex. The Kikuchi graph is then isomorphic to the $k$-dimensional hypercube,
whose adjacency matrix satisfies $\lambda_{\max} = k$ and $\lambda_{\min}=-k$.

\paragraph{Applications.} Local Hamiltonians are the basic mathematical model for quantum many-body
systems. A Hamiltonian is a Hermitian operator whose eigenvalues are the possible energies
of the system. In physically relevant models, it is typically a sum of simple terms, each
involving only a few particles. The central computational task is to estimate an extremal
energy: the ground-state energy, or, after negating the Hamiltonian, the maximum energy.
From the perspective of theoretical computer science, this is the quantum analogue of
constraint satisfaction, containing classical problems such as Max Cut and $3$-SAT as
special cases. We focus on $2$-local Hamiltonians of the form
\[
    H = \sum_{\{i,j\}\in E} w_{i,j} h_{i,j}, \qquad w_{i,j}>0,
\]
where $h_{i,j}$ acts non-trivially only on particles $i$ and $j$. The maximum energy is
$\lambda_{\max}(H)=\max_{\|\psi\|=1}\langle \psi,H\psi\rangle$. 
We consider the three families of $2$-local Hamiltonians that have been extensively studied,
the Quantum Max Cut (QMC) Hamiltonian, the EPR Hamiltonian, and the XY Hamiltonian:
\[
\begin{gathered}
    H_{\mathrm{QMC}} =
        \sum_{\{i,j\}\in E} \tfrac{1}{2}\Paren{I_i I_j - X_iX_j - Y_iY_j - Z_iZ_j}, \\
    \begin{alignedat}{2}
    H_{\mathrm{EPR}} &=
        \sum_{\{i,j\}\in E} \tfrac{1}{2}\Paren{I_i I_j + X_iX_j + Y_iY_j - Z_iZ_j},
    \qquad
    H_{\mathrm{XY}} &=
        \sum_{\{i,j\}\in E} \Paren{I_i I_j - \tfrac{1}{2}\Paren{X_iX_j + Y_iY_j}}.
    \end{alignedat}
\end{gathered}
\]
Here $X_i$ denotes the operator acting as $X$ on qubit $i$ and as the identity
on all other qubits, and similarly for $Y_i,Z_i,I_i$ and together they form the set of
Pauli matrices:
\[
    I=\begin{pmatrix}1&0\\0&1\end{pmatrix}, \qquad
    X=\begin{pmatrix}0&1\\1&0\end{pmatrix}, \qquad
    Y=\begin{pmatrix}0&-i\\i&0\end{pmatrix}, \qquad
    Z=\begin{pmatrix}1&0\\0&-1\end{pmatrix}.
\]

As a direct consequence of \cite[Corollaries 5-10]{AptePS25}, we obtain improved 
approximation guarantees for the QMC, EPR, and XY problems.
\begin{theorem}[Consequences for local Hamiltonians]
\label{thm:applications}
\label{thm:stuctural}
\label{thm:algorithmic}
There exist tensor products of one- and two-qubit states achieving the structural
ratios below. Moreover, the ratios in the final column are achieved by efficient
algorithms that prepare such states.
\[
\begin{array}{c@{\qquad}c@{\qquad}c}
\toprule
\text{Hamiltonian} & \text{Existence} & \text{Efficient algorithm} \\
\midrule
H_{\mathrm{QMC}} & 5/8 & 0.614 \\
H_{\mathrm{XY}} & 5/7 & 0.674 \\
H_{\mathrm{EPR}} & (\sqrt{5}+1)/4 & (\sqrt{5}+1)/4 \\
\bottomrule
\end{array}
\]
\end{theorem}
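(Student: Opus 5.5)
Theorem~\ref{thm:applications} is a reduction to the conditional results of Apte, Parekh, and Sud. The plan has three steps: identify which conjectures each of \cite[Corollaries 5--10]{AptePS25} assumes, discharge those hypotheses using Theorem~\ref{thm:mainthm}, and then read off the ratios.

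\textbf{Step 1: the sector decomposition.} The Hamiltonians $H_{\mathrm{QMC}}$, $H_{\mathrm{XY}}$ and $H_{\mathrm{EPR}}$ all commute with the total Hamming weight operator $\sum_i Z_i$. Hence $\lambda_{\max}(H)$ is the maximum, over $0\le k\le n$, of the top eigenvalue of $H$ restricted to the span of the weight-$k$ computational basis states. On this sector, with unit weights:
\begin{itemize}
\item $X_iX_j+Y_iY_j$ acts as $2$ times the swap of a $0$ and a $1$ at positions $i,j$. Its restriction is therefore $2A^{(k)}_G$.
\item The diagonal terms $I_iI_j$ and $Z_iZ_j$ contribute the counts of edges cut and uncut by the $k$-set.
\end{itemize}
This gives the identifications
\[
H_{\mathrm{QMC}}|_k = L^{(k)}_G,\qquad H_{\mathrm{EPR}}|_k = Q^{(k)}_G,
\]
up to the sign convention on $A$, and $H_{\mathrm{XY}}|_k = mI - A^{(k)}_G$. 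I would verify these identifications explicitly, in particular the convention matching (C1)--(C4). Since the theorem as stated uses uniform weights via $m$, I would also check how weighted graphs are handled in \cite{AptePS25}: either by rational approximation and edge duplication (multigraphs), or by confirming that the argument behind Theorem~\ref{thm:mainthm} extends to $\sum_e w_e$.

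\textbf{Step 2: discharge the conjectures.} Once the hypotheses are identified, Theorem~\ref{thm:mainthm} gives the needed upper bounds for every $k$ simultaneously:
\begin{itemize}
\item For QMC, $\lambda_{\max}(H_{\mathrm{QMC}})\le m+k$.
\item For EPR, $\lambda_{\max}(H_{\mathrm{EPR}})\le m+k$.
\item For XY, $\lambda_{\max}(H_{\mathrm{XY}})\le m+\tfrac12(m+k)$, by (C4).
\end{itemize}
Apte--Parekh--Sud prove that, assuming these inequalities, the optimal energy is bounded by an explicit quantity. They then compare it against product states and matching-based states, meaning tensor products of singlets or EPR pairs on a matching together with single-qubit states elsewhere. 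This comparison yields the existence ratios $5/8$, $5/7$, and $(\sqrt5+1)/4$. For the algorithmic column, their rounding algorithms output such states efficiently, and their analysis only invokes the same conjectured upper bound on $\lambda_{\max}$. Substituting Theorem~\ref{thm:mainthm} therefore makes those guarantees ($0.614$, $0.674$, $(\sqrt5+1)/4$) unconditional.

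\textbf{Main obstacle.} The substantive work is entirely in Theorem~\ref{thm:mainthm}. Here the main risk is bookkeeping: making sure each of Corollaries 5--10 requires exactly the inequality proven, with matching normalization and range of $k$. I would check this first for XY, since its Hamiltonian mixes an identity shift with the adjacency term, and the relevant conjecture is the lower bound (C4) rather than a Laplacian bound.
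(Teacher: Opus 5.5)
Your proposal is correct and follows the paper's route: the paper obtains this theorem simply by invoking \cite[Corollaries 5--10]{AptePS25}, which are conditional on (C1)--(C4), and discharging those hypotheses with Theorem~\ref{thm:mainthm}. Your sector identifications $H_{\mathrm{QMC}}|_k=L^{(k)}_G$, $H_{\mathrm{EPR}}|_k=Q^{(k)}_G$, $H_{\mathrm{XY}}|_k=mI-A^{(k)}_G$ are exact with the paper's definitions (no sign ambiguity) and make explicit why (C1), (C2), (C4) are the relevant hypotheses, though the bounds should be read sector by sector rather than as bounds on $\lambda_{\max}(H)$. One caution on your weights aside, which the theorem as stated does not need: neither mechanism you list would work, because the bound $m+k$ is not scale-invariant (an edge of weight $w>1$, or $w\geq 2$ parallel copies of one edge, already has $\lambda_{\max}(L^{(1)})=2w>w+1$), so Theorem~\ref{thm:mainthm} is genuinely a statement about simple unweighted graphs.
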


\paragraph{Comparison to prior work.} Quantum Max Cut was formulated by Gharibian and
Parekh~\cite{GharibianP19}, who gave a $0.498$-approximation for QMC and a
$0.649$-approximation for XY. For QMC on general graphs, subsequent worst-case
ratios progressed through $1/2$~\cite{ParekhT22}, $0.526$~\cite{HuberTPG24},
$0.531$~\cite{anshu2020beyond}, $0.533$~\cite{parekh2021application},
$0.562$~\cite{Lee22}, $0.595$~\cite{LeeP24}, $0.599$~\cite{JorqueraKKSW24},
and $0.603$~\cite{GriblingSS25}. The previous best was $0.611$, due to Apte,
Lee, Marwaha, Parekh, and Sud~\cite{ApteLMPS25}. EPR was introduced by
King~\cite{King23}; its ratios progressed from $1/\sqrt{2}$~\cite{King23} to
$18/25$~\cite{JorqueraKKSW24}, $(1+\sqrt{5})/4$~\cite{JuN25,ApteLMPS25}, and
finally $0.8395$, due to Apte, Lee, Marwaha, Parekh, Sinjorgo, and
Sud~\cite{ApteLMPSS25}.

Our QMC and XY guarantees improve the best known worst-case ratios. For EPR, our
ratio does not beat $0.8395$, but it gives a simpler certificate: a tensor product
of one- and two-qubit states. This is a much smaller ansatz than the AGM circuits
used in several prior works~\cite{anshu2020beyond,King23,ApteLMPS25,ApteLMPSS25}:
a block-product state can entangle each qubit with at most one partner, while AGM
circuits apply commuting two-qubit rotations across many edges and can exploit
overlapping local correlations. Exact optimization is QMA-complete for QMC and
XY~\cite{PiddockM17}, while EPR lies in StoqMA~\cite{PiddockM17}. QMC is also
APX-hard~\cite{Piddock25}, and under a conjectured vector-valued Borell
inequality, obtaining a $(0.956+\epsilon)$-approximation is as hard as the Unique
Games Conjecture~\cite{HwangNP0W23}.

\paragraph{Brouwer's Conjecture.}
Brouwer's conjecture is a central open problem on partial sums of Laplacian
eigenvalues. For a graph $G=([n],E)$, let $\epsilon_r(G)=\sum_{i=1}^r \lambda_i(L(G))-\abs{E}$. 
Brouwer~\cite{BrouwerH12,HaemersMTR10} conjectured that for every graph $G$ and every $1\leq r\leq n$,
$\epsilon_r(G)\leq \binom{r+1}{2}$. 

\begin{theorem}[Improved estimates for Brouwer's conjecture]\label{thm:Brouwer}
Let $G = ([n],E)$ be a graph with $m$ edges, and order the Laplacian eigenvalues as
$\lambda_1(L(G))\geq\cdots\geq\lambda_n(L(G))$.
For every $1\leq r\leq n$, we have
\[
    \epsilon_r(G)
    \leq \binom{r+1}{2}+\frac{4}{3}r^{3/2}-r+\sqrt{r}.
\]
\end{theorem}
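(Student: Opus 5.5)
The plan is to pass from partial sums of Laplacian eigenvalues of $G$ to the top eigenvalue of an operator on the level-$r$ Kikuchi space, and then invoke \cref{thm:mainthm}. Let $\mathcal L=\sum_{\{i,j\}\in E}(e_i-e_j)(e_i-e_j)^{\top}$ be the Laplacian of $G$. Its second-quantized action on the exterior power $\Lambda^r\mathbb R^n$, namely the derivation $d\Gamma(\mathcal L)$ in the basis $e_S=e_{s_1}\wedge\cdots\wedge e_{s_r}$ for $S\in\binom{[n]}{r}$, has eigenvalues $\lambda_{i_1}+\cdots+\lambda_{i_r}$ over distinct indices. Hence
\[
\sum_{i=1}^r\lambda_i(L(G))=\lambda_{\max}(d\Gamma(\mathcal L)).
\]
Computing matrix entries edge by edge, $d\Gamma(\mathcal L)=L^{(r)}_G+2D$. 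Here $L^{(r)}_G$ is the signed Laplacian of $F_r(G)$, with the signs coming from reordering the wedge. The matrix $D$ is diagonal with $D_{S,S}=e_G(S)$, the number of edges of $G$ inside $S$. The factor $2$ arises because an edge with both endpoints in $S$ contributes $1$ from each endpoint, whereas a cut edge contributes $1$ in total. By \cref{thm:mainthm}, $\lambda_{\max}(L^{(r)}_G)\le m+r$. Bounding $D$ crudely by $\binom r2$ gives only $\epsilon_r(G)\le r^2$ (essentially Lew's argument), so all the work goes into controlling the contribution of $D$.

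To do better, I will not bound $L^{(r)}$ and $D$ separately. Instead I will reshuffle mass between them using a parameter $t\approx\sqrt r$. Fix a top unit eigenvector $\psi$ of $d\Gamma(\mathcal L)$, so that $\epsilon_r(G)+m=\langle\psi,L^{(r)}_G\psi\rangle+2\langle\psi,D\psi\rangle$. The two terms compete. A configuration $S$ with many internal edges has few cut edges incident to those vertices, and $L^{(r)}$ only gains from cut edges. Concretely, the diagonal of $L^{(r)}_G$ at $S$ is $e(S,\bar S)$, and $e(S,\bar S)+2e(S)=\sum_{v\in S}\deg(v)$. So I will split $G=G_1\cup G_2$ into edge-disjoint parts, where $G_1$ consists of the edges among the at most $\sim r$ vertices of highest degree (or a degree-threshold variant tuned by $t$) and $G_2$ is the rest. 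I then apply Ky Fan subadditivity, $\epsilon_r(G)\le\epsilon_r(G_1)+\epsilon_r(G_2)$. On $G_2$ I use the Kikuchi decomposition, where the $D$-term is now small because $G_2$ has no dense $r$-sets: each vertex of $S$ contributes at most about $t$ internal edges beyond the top block. On $G_1$, a graph on few vertices, I use the elementary bound $\sum_{i\le r}\lambda_i\le\sum_{i\le r}d_i$ together with $\epsilon_r\le\binom{r+1}{2}$-type counting of edges among top-degree vertices.

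Balancing the two estimates should give a main term $\binom{r+1}{2}$ from the dense part. The error terms are of the form $r\cdot t+r^2/t$ plus lower-order terms. Optimizing the discrete sum $\sum_{j\le r}\min(j,\,\cdot)$ more carefully, rather than the crude product, should produce the constants $\tfrac43 r^{3/2}-r+\sqrt r$, via $\sum_{j\le r}\sqrt j\le\tfrac23 r^{3/2}+\tfrac12\sqrt r$ doubled.

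The main obstacle I expect is the middle step. I must choose the split (or the weights moving mass between $L^{(r)}$ and $D$) so that the $D$-contribution is bounded by roughly $\sum_{j\le r}\min(j-1,\,c\sqrt r)$ rather than $\binom r2$, while the price paid in the $L^{(r)}$-term stays only $O(r^{3/2})$. I would first try a vertex-ordering argument: order $S$ by degree and charge each internal edge to its later endpoint. Then I would check whether \cref{thm:mainthm} applied to $G_2$ (with $m_2$ edges and the same level $r$) leaves enough slack to absorb that charge.
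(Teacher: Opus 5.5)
Your reduction is correct and is the paper's starting point (see its Remark): $\sum_{i\le r}\lambda_i(L(G))=\lambda_{\max}(L(G)^{[r]})$ and $L(G)^{[r]}=L^{(r)}_G+2W_r(G)$, which with \cref{thm:mainthm} only yields $m+r^2$. The gap is in the middle step, and the route you sketch there cannot work.

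\textbf{Why the split fails.} Ky Fan subadditivity over ``edges inside the top-$r$ degree block'' versus ``the rest'' loses too much. Take two disjoint copies of $K_{r+1}$ and attach a pendant leaf to $r$ vertices of the first copy, so that the top-$r$ block is a $K_r$ inside it.
\begin{itemize}
\item Then $\epsilon_r(G_1)=\binom r2$ exactly.
\item $G_2$ has the second $K_{r+1}$ as a connected component, so $\sum_{i\le r}\lambda_i(L(G_2))\ge r(r+1)$. Since $m_2=\binom{r+1}2+2r$, this gives $\epsilon_r(G_2)\ge\binom{r+1}2-2r$.
\end{itemize}
Hence $\epsilon_r(G_1)+\epsilon_r(G_2)\ge r^2-2r$, however sharply you bound each piece, and this exceeds the target for large $r$.

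The underlying error is the claim that $G_2$ ``has no dense $r$-sets''. Deleting the edges of one block says nothing about other $r$-sets, whose vertices may still have degree $\ge r$. So charging each internal edge to its later endpoint still allows $\binom r2$. The degree-threshold variant does give $e_{G_2}(S)\le r(t-1)$, but then $G_1$ is an arbitrary graph on arbitrarily many vertices (all of $G$ whenever $\delta(G)\ge t$), so nothing has been reduced.

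Two further problems:
\begin{itemize}
\item Your ``elementary bound'' $\sum_{i\le r}\lambda_i\le\sum_{i\le r}d_i$ is backwards. Schur--Horn gives $\ge$; a single edge with $r=1$ has $\lambda_1=2>1$.
\item The final constants are reverse-engineered rather than derived.
\end{itemize}

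\textbf{What the paper does instead.} It splits the operator, not the graph, and reopens the proof of \cref{thm:mainthm} rather than using it as a black box. It writes $L(G)^{[r]}=B_r(G)+W_r(G)$ with $B_r=L^{(r)}_G+W_r$, pays $\binom r2$ for $W_r$ trivially, and proves $\lambda_{\max}(B_r(G))\le m+2\sum_{s\le r}\sqrt s-r$ by the edge-gradient induction:
\begin{itemize}
\item The gradient $G_ex=(u_e(x),w_e(x))$ factors the edge term, $G_e^*G_e=B_{r,e}$.
\item Edges disjoint from $e$ produce $B_{r-1}$ and $B_{r-2}$ on $G-\{i,j\}$, which the induction handles.
\item The twist term, grouped by common vertex $v$, is controlled by the star bound $\lambda_{\max}\bigl(\sum_{e\ni v}B_{r,e}\bigr)\le d_v+\sqrt r$ (\cref{lem:star}) together with $B_{r,e}^2=2B_{r,e}-W_{r,e}$.
\end{itemize}
Each level of the induction costs $2\sqrt r-1$, and these costs sum to the $\frac43 r^{3/2}$ term. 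The $\sqrt r$ comes from the spectrum of a star, not from any degree threshold or vertex split.
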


\Cref{thm:Brouwer} improves a recent estimate of Lew~\cite{Lew26}, who showed
that $\epsilon_r(G) \leq \binom{r+1}{2} + 4 r^{3/2} -r - 2 \sqrt{r}$. The asymptotic improvement is 
a factor of $1/3$ on the leading term.

\section{Notation and Preliminaries}
For any two sets $X, Y$, define their symmetric difference $X\oplus Y$ to be
$(X\setminus Y)\cup(Y\setminus X)$. It is easy to see that $X\oplus Y = Y\oplus X$,
and if $X\oplus Y = Z$, then $X\oplus Z = Y$.

\begin{definition}[Kikuchi Graphs]
\label{def:kikuchi}
Let $G = ([n],E)$ be a graph with $m = \abs{E}$ edges. For any $k \leq n$, the
\emph{level $k$ Kikuchi graph} $F_k(G)$ of $G$ is a graph on $\binom{[n]}{k}$
with edges $S \sim T$ if and only if $S \oplus T \in E$. 
\end{definition}
We denote the Laplacian of $F_k(G)$ by $L_G^{(k)}$. Also denote by $A_G^{(k)}$ the
adjacency matrix of $F_k(G)$, and let
$D_G^{(k)}\in\R^{\binom{[n]}{k}\times\binom{[n]}{k}}$ be the \emph{degree matrix} of
$F_k(G)$, i.e. $D_G^{(k)}$ is a diagonal matrix whose rows and columns are indexed by
elements of $\binom{[n]}{k}$, with $D_G^{(k)}(S, S)$ being defined to be the degree of
the vertex $S\in\binom{[n]}{k}$ in the graph $F_k(G)$. 

Note that $L_G^{(k)} = D_G^{(k)} - A_G^{(k)}$. We also define the \emph{unsigned}
Laplacian $Q_G^{(k)}$ to be $Q_G^{(k)}:= D_G^{(k)} + A_G^{(k)}$. It is a standard fact
that $L_G^{(k)}, Q_G^{(k)}$ are positive semidefinite matrices, i.e.
$L_G^{(k)}\succcurlyeq 0, Q_G^{(k)}\succcurlyeq 0$. Here $\succcurlyeq$ refers to the
Loewner order on Hermitian matrices, with $A\succcurlyeq B$ denoting that $A - B$ is
positive semidefinite. One consequence of positive semidefiniteness is the following:
\begin{proposition}\label{proposition:ALQ-relationship}
    $-\frac{1}{2}\cdot\lambda_{\max}(L_G^{(k)})\leq\lambda_{\min}(A_G^{(k)})
    \leq\lambda_{\max}(A_G^{(k)})\leq\frac{1}{2}\cdot\lambda_{\max}(Q_G^{(k)})$. 
\end{proposition}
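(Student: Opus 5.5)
The plan is to use only the positive semidefiniteness of $L_G^{(k)}$ and $Q_G^{(k)}$ together with the nonnegativity of the degree matrix. Since $A_G^{(k)}$ is real symmetric, both extremal eigenvalues are Rayleigh quotients. Let $v$ be a unit eigenvector of $A_G^{(k)}$ with eigenvalue $\lambda_{\max}(A_G^{(k)})$. Because $Q_G^{(k)} = D_G^{(k)} + A_G^{(k)}$, we get
\[
\lambda_{\max}(Q_G^{(k)}) \geq v^\top Q_G^{(k)} v = v^\top D_G^{(k)} v + \lambda_{\max}(A_G^{(k)}).
\]
This alone does not yield the factor $\tfrac12$, so I need to compare $v^\top D_G^{(k)} v$ with $\lambda_{\max}(A_G^{(k)})$. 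The key step is the inequality $v^\top D_G^{(k)} v \geq v^\top A_G^{(k)} v$. This holds because $D_G^{(k)} - A_G^{(k)} = L_G^{(k)} \succcurlyeq 0$. Substituting, we get $\lambda_{\max}(Q_G^{(k)}) \geq 2\, v^\top A_G^{(k)} v = 2\lambda_{\max}(A_G^{(k)})$, which is the right-hand inequality.

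The left-hand inequality is symmetric. Let $u$ be a unit eigenvector of $A_G^{(k)}$ for $\lambda_{\min}(A_G^{(k)})$. Then
\[
\lambda_{\max}(L_G^{(k)}) \geq u^\top L_G^{(k)} u = u^\top D_G^{(k)} u - \lambda_{\min}(A_G^{(k)}).
\]
Since $Q_G^{(k)} \succcurlyeq 0$, we have $u^\top D_G^{(k)} u \geq -u^\top A_G^{(k)} u = -\lambda_{\min}(A_G^{(k)})$. Combining the two gives $\lambda_{\max}(L_G^{(k)}) \geq -2\lambda_{\min}(A_G^{(k)})$, which rearranges to the claim. The middle inequality $\lambda_{\min} \leq \lambda_{\max}$ is trivial.

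There is no real obstacle. The only point requiring care is to use the other Laplacian's semidefiniteness to control the diagonal term in each case: $L \succcurlyeq 0$ for the $Q$ bound and $Q \succcurlyeq 0$ for the $L$ bound. If a justification of the standard fact is wanted, it follows from the edge-sum expressions $x^\top L x = \sum_{S\sim T}(x_S - x_T)^2$ and $x^\top Q x = \sum_{S\sim T}(x_S + x_T)^2$.
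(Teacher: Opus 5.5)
Your proof is correct and takes the paper's approach. The paper writes $A_G^{(k)}=\tfrac12\bigl(Q_G^{(k)}-L_G^{(k)}\bigr)$ and uses $L_G^{(k)}\succcurlyeq 0$ to get $A_G^{(k)}\preceq\tfrac12 Q_G^{(k)}$, and $Q_G^{(k)}\succcurlyeq 0$ to get $A_G^{(k)}\succeq-\tfrac12 L_G^{(k)}$; these are your two quadratic-form inequalities evaluated at the extremal eigenvectors of $A_G^{(k)}$.
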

\begin{proof}
    Note that $Q_G^{(k)} - L_G^{(k)} = 2A_G^{(k)}$, and thus 
    \[
        A_G^{(k)} = \frac{1}{2}\Paren{Q_G^{(k)} - L_G^{(k)}}
        \overset{(\ast)}{\preceq}\frac{1}{2}\cdot Q_G^{(k)}
        \implies
        \lambda_{\max}(A_G^{(k)})\leq\frac{1}{2}\cdot\lambda_{\max}(Q_G^{(k)}),
    \]
    where $(\ast)$ uses the fact that $L_G^{(k)}\succeq 0$. Similarly, we also have 
    \[
        A_G^{(k)} = \frac{1}{2}\Paren{Q_G^{(k)} - L_G^{(k)}}
        \overset{(\ast)}{\succeq}-\frac{1}{2}\cdot L_G^{(k)}
        \implies
        \lambda_{\min}(A_G^{(k)})\geq-\frac{1}{2}\cdot\lambda_{\max}(L_G^{(k)}).
        \qedhere
    \]
\end{proof}
Thus to establish \cref{thm:mainthm} it suffices to show that
$\lambda_{\max}(L_G^{(k)}), \lambda_{\max}(Q_G^{(k)})\leq m + k$. 

\section{Eigenvalue Bounds for Kikuchi Graphs}

In this section, we prove the Laplacian and signless Laplacian bounds in the
Apte--Parekh--Sud conjectures.
It suffices to prove these two bounds: once the signless Laplacian bound is known,
the adjacency bound in \cref{conj:aps} follows from
\cref{proposition:ALQ-relationship} and
$\rho(Q_G^{(k)})=\lambda_{\max}(Q_G^{(k)})$.
We treat the two cases simultaneously.
For $b\in\{-1,1\}$, define
\[
    L_{b,G}^{(k)}=D_G^{(k)}+b A_G^{(k)}.
\]
Thus $L_{-1,G}^{(k)}=L_G^{(k)}$ and $L_{1,G}^{(k)}=Q_G^{(k)}$.
The main theorem of this section is the following.
\begin{theorem}
\label{thm:token-lap-bound}
For any graph $G=([n],E)$ with $m=\abs{E}$, any $0\leq k\leq n$, and any
$b\in\{-1,1\}$,
\begin{equation*}
    \lambda_{\max}(L_{b,G}^{(k)})\leq m+k.
\end{equation*}
\end{theorem}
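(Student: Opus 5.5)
We will prove \cref{thm:token-lap-bound} by induction on the number of edges $m$, deleting the edges of a star one at a time, with a quadratic-form argument at each step. First the setup. Write the quadratic form as
\[
    \langle x, L_{b,G}^{(k)} x\rangle
    = \sum_{\{i,j\}\in E}\ \sum_{\substack{S\ni i,\ S\not\ni j\\ T=S\oplus\{i,j\}}} \bigl(x_S + b\,x_T\bigr)^2 ,
\]
so that $L_{b,G}^{(k)}=\sum_{e\in E} L_{b,e}^{(k)}$. Each edge contributes a PSD term. Each term $L_{b,e}^{(k)}$ is, up to a signed permutation, a direct sum of copies of $\begin{pmatrix}1&b\\b&1\end{pmatrix}$ together with zero blocks. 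Hence $\lambda_{\max}(L_{b,e}^{(k)})=2$.

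The trivial bound obtained from subadditivity is $2m$. The target $m+k$ instead says that the edges are essentially ``orthogonal'' except for a defect of $k$. We therefore need a sharper decomposition. Since $L_{b,G}^{(k)}$ and $L_{b,G}^{(n-k)}$ are unitarily equivalent via complementation, we may assume $k\le n/2$. We then decompose $G$ into stars. Fix a vertex $v$ with neighbourhood $N(v)$ of size $d$, and let $G'=G-v$ be the graph with the $d$ edges at $v$ removed. We aim to prove the recursive inequality
\[
    L_{b,G}^{(k)} \preceq \Bigl(\text{something with spectrum}\le d+1\Bigr)\ \oplus\ \text{terms controlled by } L_{b,G'}^{(k)},\ L_{b,G'}^{(k-1)}.
\]
Concretely, split $\binom{[n]}{k}$ into the sets containing $v$ and those not containing $v$. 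This gives a block form
\[
    L_{b,G}^{(k)}=\begin{pmatrix} L_{b,G'}^{(k-1)}+D_1 & bB\\ bB^\top & L_{b,G'}^{(k)}+D_0\end{pmatrix},
\]
where $B$ moves the token from $v$ to a neighbour in $N(v)$, and $D_0,D_1$ are the diagonal degree contributions of the star.

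The inductive hypothesis should be strengthened so that the recursion closes. The plan is to prove the statement for the pair $(G,k)$ from the pairs $(G',k)$ and $(G',k-1)$. The induction would give $m-d+k$ and $m-d+k-1$ respectively, and the star block must add at most $d$. The difficulty is that the off-diagonal $B$ couples the two blocks. So we will instead use the operator inequality
\[
    \begin{pmatrix} D_1 & bB\\ bB^\top & D_0\end{pmatrix}
    \preceq \begin{pmatrix} (d+1)I - \text{(correction)} & 0\\ 0 & \text{(correction)}\end{pmatrix}.
\]
This follows from a Cauchy--Schwarz/AM--GM weighting $2|x_S y_T|\le \alpha x_S^2+\alpha^{-1}y_T^2$, with weights chosen according to how many neighbours of $v$ lie in $S$ or $T$. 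The star on $v$ with $k$ tokens is itself the extremal example: its Kikuchi graph has top eigenvalue exactly $d+1$ when $v$ is occupied. This is why the charge should be $+1$ on the $v$-occupied side and $0$ on the other, so that the recursion becomes $(m-d+k-1)+(d+1)=m+k$ on one block and $(m-d+k)+d=m+k$ on the other.

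The main obstacle is exactly this weighting step. The degree terms $D_0,D_1$ depend on $|S\cap N(v)|$, and the star's contribution is not uniformly $\le d$ on the unoccupied side: a set $S\not\ni v$ can have degree up to $d$ from the star while also being coupled to $k$ occupied configurations. The first attempt will be to diagonalize the star part exactly. Its action commutes with permutations of $N(v)$, so it decomposes into Johnson-scheme blocks indexed by $j=|S\cap N(v)|$, where we can compute its eigenvalues explicitly; these are of the form $(d-j)+\ldots$ versus $j+1$. We will then find a diagonal potential $\phi(S)$, depending only on $j$ and on whether $v\in S$, such that star $\preceq \mathrm{diag}(\phi)$. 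The potential should satisfy $\phi\le d+1$ on occupied sets and $\phi\le d$ on unoccupied sets, after absorbing the defect into the strengthened induction. If that fails, the fallback is to strengthen the hypothesis to a weighted bound $L_{b,G}^{(k)}\preceq (m+\min(k,\cdot))I$ involving the number of tokens on a vertex cover, and to induct on that quantity instead.
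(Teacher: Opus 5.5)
\textbf{There is a genuine gap.} You located the obstacle correctly, but the weighting step is not merely hard: the domination you need is false, so this recursion cannot close.

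Fix the tokens off $\{v\}\cup N(v)$, and take the invariant block of configurations with exactly $q$ tokens on $\{v\}\cup N(v)$, where $1\le q\le d$. On this block the star operator is
\[
\begin{pmatrix}(d-q+1)I & bB\\ bB^\top & qI\end{pmatrix},
\]
where $B$ is the inclusion incidence between $(q-1)$- and $q$-subsets of $N(v)$. Restrict to the span of the normalized indicator vectors of the two sides. There it becomes $\begin{pmatrix} d-q+1 & b\sigma\\ b\sigma & q\end{pmatrix}$ with $\sigma^2=q(d-q+1)$, whose eigenvalues are $0$ and $d+1$. The eigenvector for $d+1$ puts mass $q/(d+1)>0$ on the $v$-unoccupied side.

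Now take any potential $\phi$ with $\phi\le d+1$ on occupied and $\phi\le d$ on unoccupied configurations. Its Rayleigh quotient on that eigenvector is at most $d+1-q/(d+1)$, so the star is not $\preceq \mathrm{diag}(\phi)$. Your heuristic, that the star's eigenvalue $d+1$ can be charged entirely to the $v$-occupied side, is exactly what fails.

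No reweighting can rescue this, as long as the only inductive inputs are $\lambda_{\max}(L^{(k-1)}_{b,G-v})\le m-d+k-1$ and $\lambda_{\max}(L^{(k)}_{b,G-v})\le m-d+k$. On the same block, $\mathrm{diag}((m-d+k-1)I,(m-d+k)I)$ plus the star has an eigenvalue $m-d+k+\tfrac12\bigl(d+\sqrt{d^2+4q}\bigr)>m+k$. Already for a single edge $\{u,v\}$ with $k=1$, you would need $\begin{pmatrix}1&b\\ b&1\end{pmatrix}\preceq\mathrm{diag}(2,1)$, whose difference has determinant $-1$; the scalar recursion gives $(3+\sqrt5)/2>2$. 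The vertex-cover fallback is not specified enough to evaluate, so as it stands the proposal contains no proof. Incidentally, a single star with $k$ tokens has top eigenvalue $d+1$, not $d+k$; the tight example is a disjoint union of $k$ stars.

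\textbf{The missing idea} is to induct on $k$ through the gradient of an eigenvector, deleting both endpoints of an edge rather than a star. For $L^{(k)}_{b,G}x=\lambda x$ and $e=(i,j)$, the paper sets $g_e(R)=x_{R\cup\{i\}}+b\,x_{R\cup\{j\}}$ on $\binom{[n]\setminus\{i,j\}}{k-1}$ and proves
\[
\lambda g_e=\bigl(2I+L^{(k-1)}_{b,G-\{i,j\}}\bigr)g_e+\sum_{f:\,|f\cap e|=1}\sigma_{e,f}\,g_f\circ\pi_{e,f}.
\]
Here each edge adjacent to $e$ contributes a signed relabelling of its own gradient. The argument then takes unsquared $\ell_2$ norms and applies the inductive bound to $G-\{i,j\}$, which has $m-d_i-d_j+1$ edges. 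Summing over $e$, every $\|g_f\|_2$ receives coefficient $(m-d_u-d_v+k+2)+(d_u+d_v-2)=m+k$. The coupling between configurations that do and do not contain a given vertex, which is what breaks your block recursion, is absorbed by these twist terms. Their number exactly matches the number of edges lost when both endpoints are deleted.
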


\paragraph{Setting up definitions.}

For an edge $f=\{i,j\}\in E(G)$, let
\begin{equation}
\label{eq:token-edge-fiber}
    \calE_f^{(k)}
    =
    \Braces{
        \{S,S\symdiff f\}: S\in \binom{[n]}{k},\, \abs{S\cap f}=1
    }
\end{equation}
be the set of token-graph edges obtained by moving a token across the fixed edge $f$ of
the base graph.
The edge set of $F_k(G)$ is therefore
\begin{equation}
\label{eq:token-edge-decomp}
    E(F_k(G))=\bigcup_{f\in E(G)}\calE_f^{(k)}.
\end{equation}
For $f\in E(G)$, let $L_{b,f}^{(k)}$ be the $D+b A$ operator of the graph
with vertex set $\binom{[n]}{k}$ and edge set $\calE_f^{(k)}$. Then
\[
    L_{b,G}^{(k)}=\sum_{f\in E(G)} L_{b,f}^{(k)}.
\]
Equivalently, for every function $x$ on $\binom{[n]}{k}$,
\begin{equation}
\label{eq:local-token-lap}
    (L_{b,f}^{(k)} x)_S =
    \begin{cases}
        x_S+b x_{S\symdiff f}, & \text{if } \abs{S\cap f}=1,\\
        0, & \text{otherwise.}
    \end{cases}
\end{equation}

The proof proceeds by induction on $k$.
The key idea is to differentiate an eigenvector across an edge of the original graph.
For $b=-1$, this is the usual difference across an edge; for $b=1$, it is
the corresponding signless edge derivative. Edges disjoint from the differentiated edge
produce a $(k-1)$-token operator on the deleted graph, while adjacent edges produce signed
coordinate permutations of neighboring edge derivatives.

\begin{proof}
Fix $b\in\{-1,1\}$. 
The base case $k=0$ is immediate: the $0$-token graph has one vertex, so
$L_{b,G}^{(0)}=0$.
The case $k=n$ is also trivial, since the $n$-token graph has one vertex.
We therefore assume $1\leq k\leq n-1$, and assume the theorem has been proved for
$k-1$ for every simple graph, with this fixed value of $b$.
Let $x\in\R^{\binom{[n]}{k}}$ be an eigenvector of $L_{b,G}^{(k)}$ with eigenvalue
$\lambda$,
\begin{equation}
\label{eq:eigenvector-equation}
    L_{b,G}^{(k)}x=\lambda x.
\end{equation}
Since
\[
    \langle y,L_{b,G}^{(k)}y\rangle
    =
    \sum_{\{S,T\}\in E(F_k(G))}
    \Paren{y_S+b y_T}^2,
\]
the operator $L_{b,G}^{(k)}$ is positive semidefinite, and hence $\lambda\geq 0$.
We will show $\lambda\leq m+k$.
Orient every edge of $G$ arbitrarily.
For an oriented edge $e=(i,j)$, define the background configuration $D_e$ to be all the
possible ways to place $k-1$ tokens away from the edge $e$:
\begin{equation*}
    D_e=\Braces{R\subseteq [n]\setminus\{i,j\}: \abs{R}=k-1} = \binom{[n]\setminus\{i,j\}}{k-1}.
\end{equation*}
Next, consider the edge gradient vector
$g_e\in \R^{D_e}\subset\R^{\binom{[n]}{k-1}}$\footnote{We extend a vector in
$\R^{D_e}$ to a vector in $\R^{\binom{[n]}{k-1}}$ by placing zeros in the coordinates
in $\binom{[n]}{k-1}\setminus D_e$.} such that for $R\in D_e$,
\begin{equation}
\label{eq:edge-derivative}
    g_e(R)=x_{R\cup\{i\}}+b x_{R\cup\{j\}}.
\end{equation}
Also consider a similar gradient, $h_e$, obtained after applying the token operator:
\begin{equation}
\label{eq:lap-edge-derivative}
    h_e(R)=(L_{b,G}^{(k)}x)_{R\cup\{i\}}
    +b (L_{b,G}^{(k)}x)_{R\cup\{j\}}.
\end{equation}
Using \eqref{eq:eigenvector-equation}, we have
\begin{equation}
\label{eq:h-g}
    h_e=\lambda g_e.
\end{equation}

We first compute $h_e$ in terms of the edge derivatives $g_f$.
Let $e=(i,j)$ be fixed and write
\[ S=R\cup\{i\},\qquad T=R\cup\{j\}. \]
There are three types of edges $f\in E(G)$. First, if $f=e$, then
\begin{align*}
    (L_{b,e}^{(k)}x)_S+b (L_{b,e}^{(k)}x)_T
    &= (x_S+b x_T)+b(x_T+b x_S)  \\
    &=2(x_S+b x_T) =2g_e(R).
\end{align*}

Second, suppose that $f$ is disjoint from $e$.
If $\abs{R\cap f}\neq 1$, there is no legal token move across $f$ from either $S$ or $T$,
and the contribution is zero.
If $\abs{R\cap f}=1$, then
\[  S\symdiff f=(R\symdiff f)\cup\{i\}, \qquad
    T\symdiff f=(R\symdiff f)\cup\{j\}. \]
Therefore,
\begin{align*}
    (L_{b,f}^{(k)}x)_S+b (L_{b,f}^{(k)}x)_T
    &= \Paren{x_{R\cup\{i\}}+b x_{(R\symdiff f)\cup\{i\}}}
    +b\Paren{x_{R\cup\{j\}}+b x_{(R\symdiff f)\cup\{j\}}} \\
    &= \Paren{x_{R\cup\{i\}}+b x_{R\cup\{j\}}}
    + b\Paren{x_{(R\symdiff f)\cup\{i\}} +b x_{(R\symdiff f)\cup\{j\}}} \\
    &= g_e(R)+b g_e(R\symdiff f).
\end{align*}
Therefore the total contribution of all edges disjoint from $e$ is
\[
    (L_{b,G-\{i,j\}}^{(k-1)}g_e)(R),
\]
where $G-\{i,j\}$ denotes the induced subgraph obtained by deleting the vertices $i$ and $j$.
Third, suppose that $f$ shares exactly one endpoint with $e$.
Let $f=\{i,a\}$ and let
\[ D_f=\Braces{Q\subseteq [n]\setminus\{i,a\}: \abs{Q}=k-1}.\]

We claim that there is a sign $\sigma_{e,f}\in\{\pm1\}$ and a bijection
$\pi_{e,f}:D_e\to D_f$ such that
\begin{equation}
\label{eq:twist-contribution}
    (L_{b,f}^{(k)}x)_{R\cup\{i\}} +b (L_{b,f}^{(k)}x)_{R\cup\{j\}}=
    \sigma_{e,f}\, g_f(\pi_{e,f}(R)).
\end{equation}
If $a\notin R$, the token can move along $f$ from $i$ to $a$ in $R\cup\{i\}$, while
there is no feasible move across $f$ from $R\cup\{j\}$. Hence
\[ (L_{b,f}^{(k)}x)_{R\cup\{i\}} +b (L_{b,f}^{(k)}x)_{R\cup\{j\}} = x_{R\cup\{i\}}+b x_{R\cup\{a\}},
\]
which is $\sigma_{e,f}g_f(R)$, depending only on the chosen orientation of $f$ and on
$b$.
Otherwise, $a\in R$. Then there is no feasible move across $f$ from $R\cup\{i\}$, while
the token at $a$ can move to $i$ from $R\cup\{j\}$. Thus
\[(R\cup\{j\})\symdiff\{i,a\}= (R\setminus\{a\})\cup\{i,j\},
\]
and
\begin{align*}
    (L_{b,f}^{(k)}x)_{R\cup\{i\}}
    +b (L_{b,f}^{(k)}x)_{R\cup\{j\}}
    &= b\Paren{x_{R\cup\{j\}}
    +b x_{(R\setminus\{a\})\cup\{i,j\}}} \\
    &= x_{(R\setminus\{a\})\cup\{i,j\}}
    +b x_{R\cup\{j\}},
\end{align*}
which is again
\[\sigma_{e,f}\,g_f((R\setminus\{a\})\cup\{j\}).\]

The map $\pi_{e,f}$ can therefore be stated explicitly as
\begin{equation*}
    R\mapsto
    \begin{cases}
        R, & a\notin R,\\
        (R\setminus\{a\})\cup\{j\}, & a\in R.
    \end{cases}
\end{equation*}
This is a bijection from $D_e$ to $D_f$.
The case $f=\{j,a\}$ is identical, with the roles of $i$ and $j$ interchanged.
Further, the twist term in \eqref{eq:twist-contribution} is just a relabelling of
the vertices and therefore preserves Euclidean norm:
\begin{equation}
\label{eq:twist-isometry}
    \Norm{g_f\circ \pi_{e,f}}_2=\Norm{g_f}_2.
\end{equation}
Combining the three cases, we obtain the derivative identity
\begin{equation}
\label{eq:derivative-identity}
    h_e = \Paren{2I+L_{b,G-\{i,j\}}^{(k-1)}}g_e +
    \sum_{\substack{f\in E(G)\\ \abs{f\cap e}=1}}
    \sigma_{e,f}\, g_f\circ \pi_{e,f}.
\end{equation}
We now take norms.
Let $d_i$ denote the degree of vertex $i$ in $G$.
The graph $G-\{i,j\}$ has $m-d_i-d_j+1$ edges, since deleting $i$ and $j$ removes
$d_i+d_j-1$ edges.
By the induction hypothesis,
\begin{equation*}
    \lambda_{\max}(L_{b,G-\{i,j\}}^{(k-1)})
    \leq
    m-d_i-d_j+1+(k-1).
\end{equation*}
Since $L_{b,G-\{i,j\}}^{(k-1)}$ is positive semidefinite, the operator norm of
$2I+L_{b,G-\{i,j\}}^{(k-1)}$ is
$2+\lambda_{\max}(L_{b,G-\{i,j\}}^{(k-1)})$.
Using \eqref{eq:derivative-identity} and \eqref{eq:twist-isometry}, we get
\begin{equation}
\label{eq:single-edge-bound}
    \Norm{h_e}_2 \leq \Paren{2+m-d_i-d_j+1+(k-1)}\Norm{g_e}_2 + \sum_{\substack{f\in E(G)\\ \abs{f\cap e}=1}}\Norm{g_f}_2.
\end{equation}
On the other hand, by \eqref{eq:h-g} and $\lambda\geq 0$,
\[\Norm{h_e}_2=\lambda\Norm{g_e}_2.\]
Summing \eqref{eq:single-edge-bound} over all edges $e=\{i,j\}$ gives
\begin{equation}
\label{eq:summed-bound}
    \lambda\sum_{e\in E(G)}\Norm{g_e}_2 \leq \sum_{e=\{i,j\}\in E(G)}
    \Paren{2+m-d_i-d_j+1+(k-1)}\Norm{g_e}_2+ \sum_{e\in E(G)} \sum_{\substack{f\in E(G)\\ \abs{f\cap e}=1}}\Norm{g_f}_2.
\end{equation}

We reindex the double sum. Fix an edge $f=\{u,v\}$.
The number of edges $e$ sharing exactly one endpoint with $f$ is
\[(d_u-1)+(d_v-1)=d_u+d_v-2.\]
Thus the total coefficient of $\Norm{g_f}_2$ on the right-hand side of
\eqref{eq:summed-bound} is
\[2+m-d_u-d_v+1+(k-1)+(d_u+d_v-2)=m+k.\]
Therefore,
\[\lambda\sum_{e\in E(G)}\Norm{g_e}_2 \leq (m+k)\sum_{e\in E(G)}\Norm{g_e}_2. \]
If $\sum_e\Norm{g_e}_2>0$, we divide by this quantity and obtain $\lambda\leq m+k$.
If $\sum_e\Norm{g_e}_2=0$, then $x_S+b x_T=0$ for every token-graph edge
$\{S,T\}$. Hence every local summand in $L_{b,G}^{(k)}x$ vanishes, so
$L_{b,G}^{(k)}x=0$ and $\lambda=0$.
In either case, $\lambda\leq m+k$.
Since $\lambda$ was an arbitrary eigenvalue of $L_{b,G}^{(k)}$, this proves the theorem.
\end{proof}

\section{Improved Bounds for Brouwer's Conjecture}

In this section, we prove \cref{thm:Brouwer}.

\paragraph{Background and notation.}

Brouwer's conjecture predicts that the error term $\epsilon_r(G)$ is at most
$\binom{r+1}{2}$ for every graph $G$ and every $1\leq r\leq n$.
We recast the partial sum $\sum_{i=1}^r\lambda_i(L(G))$ as the largest eigenvalue
of a natural operator on an exterior power.
For $U\subseteq[n]$, let $H_s(U)=\bigwedge\nolimits^s\R^U$, with the conventions
$H_0(U)=\R$ and $H_s(U)=\{0\}$ if $s<0$ or $s>\abs{U}$.
For an $s$-set $S=\{v_1<\cdots<v_s\}\subseteq U$, write
$e_S=e_{v_1}\wedge\cdots\wedge e_{v_s}$; the vectors $(e_S)_{\abs{S}=s}$
form an orthonormal basis of $H_s(U)$.

Let $M:\R^U\to\R^U$ be a linear operator. Its $s$-th additive compound is the operator
$M^{[s]}:H_s(U)\to H_s(U)$ defined on decomposable vectors by
\[
    M^{[s]}(v_1\wedge\cdots\wedge v_s)
    =
    \sum_{q=1}^s
    v_1\wedge\cdots\wedge Mv_q\wedge\cdots\wedge v_s.
\]
For $s=0$, set $M^{[0]}=0$ on $H_0(U)=\R$.
We write $A_s(G)=L(G)^{[s]}$ for the $s$-th additive compound of the Laplacian.
Thus, if $L(G)z_i=\lambda_i(L(G))z_i$, then
$z_{i_1}\wedge\cdots\wedge z_{i_s}$ is an eigenvector of $A_s(G)$ with eigenvalue
$\lambda_{i_1}(L(G))+\cdots+\lambda_{i_s}(L(G))$.
With the eigenvalues ordered as above, this gives
\begin{equation}
\label{eq:additive-spectrum}
    \lambda_{\max}(A_s(G))
    =
    \sum_{i=1}^s \lambda_i(L(G)).
\end{equation}
See for instance \cite{Lew26, Fiedler74, MarshallOA11} and the references therein for further exposition and properties of the additive compound. 

\paragraph{Proof.}

The proof follows the same edge-gradient template as the proof of
\cref{thm:token-lap-bound}.
Differentiating across an edge $e=\{i,j\}$ turns edges disjoint from $e$ into an
operator on $G-\{i,j\}$, while adjacent edges are handled by a twist term.
The exterior-power gradient $G_e$ has two components: $u_e$ is the signed analogue
of the token-graph edge derivative, and $w_e$ records the coordinates in which both
endpoints of $e$ are present.
We introduce an auxiliary operator $B_r$ so that the quadratic form of one edge is
$\Norm{u_e}_2^2+\Norm{w_e}_2^2$.

\paragraph{The edge operators.}

For an edge $e=\{i,j\}$, let $L_e=(e_i-e_j)(e_i-e_j)^\top$ and define
$A_{s,e}=L_e^{[s]}$.
This is the one-edge Laplacian operator, as in the previous section.
Let $W_{s,e}$ be the projection
\[
    W_{s,e}e_S=
    \begin{cases}
        e_S, & e\subseteq S,\\
        0, & e\not\subseteq S.
    \end{cases}
\]
Define $B_{s,e}=A_{s,e}-W_{s,e}$, $W_s(G)=\sum_{e\in E(G)}W_{s,e}$, and
$B_s(G)=\sum_{e\in E(G)}B_{s,e}$.
In particular, $A_s(G)=B_s(G)+W_s(G)$.
The operator $W_s(G)$ is diagonal in the exterior basis:
$W_s(G)e_S=e_G(S)e_S$, where $e_G(S)$ is the number of edges induced by $S$.
Hence,
\begin{equation}
\label{eq:W-bound}
    0\preceq W_s(G)\preceq \binom{s}{2}I.
\end{equation}

\begin{remark}
The point of introducing $B_r(G)$ is that a direct application of the token-graph
Laplacian bound from the previous section gives only the weaker estimate
$\sum_{i=1}^r\lambda_i(L(G))\leq \lambda_{\max}(A_r(G))\leq m+r^2$.
Indeed, $A_r(G)=L(G)^{[r]}$ contains the signed $r$-token Laplacian together with
an extra diagonal contribution from edges with both endpoints in the $r$-set:
$A_r(G)=L_G^{(r)}+2W_r(G)$.
Combining \cref{thm:mainthm} with \eqref{eq:W-bound} gives
$\sum_{i=1}^r\lambda_i(L(G))\leq m+r+2\binom{r}{2}=m+r^2$.
The improvement in \cref{thm:Brouwer} comes from applying the edge-gradient
argument to the grouped term $B_r(G)=L_G^{(r)}+W_r(G)$.
\end{remark}

We first record a useful identity relating $B_{s,e}$ and $W_{s,e}$.

\begin{lemma}\label{lem:one-edge}
For every edge $e$ and every $s\geq 0$, the operator $B_{s,e}$ is positive semidefinite and
satisfies $B_{s,e}^2=2B_{s,e}-W_{s,e}$.
\end{lemma}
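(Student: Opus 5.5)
The plan is to prove \cref{lem:one-edge} by computing $A_{s,e}$ explicitly on the exterior basis. For a fixed edge $e=\{i,j\}$, $H_s([n])$ splits into three subspaces invariant under both $A_{s,e}$ and $W_{s,e}$, according to $\abs{S\cap e}\in\{0,1,2\}$. On each piece I will show that $B_{s,e}$ is a small explicit matrix. Positive semidefiniteness and the identity $B_{s,e}^2=2B_{s,e}-W_{s,e}$ can then be checked blockwise. The case $s=0$ is trivial: all operators vanish on $H_0=\R$.

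\textbf{The three blocks.} Write $v=e_i-e_j$, so that $L_e e_i=v$, $L_e e_j=-v$, and $L_e e_u=0$ for $u\notin e$. Let $\omega=e_R$ be a wedge of basis vectors with $R\cap e=\emptyset$.
\begin{itemize}
\item[(a)] If $S\cap e=\emptyset$, every term in the defining sum of $L_e^{[s]}$ contains some $L_e e_u=0$. Hence $A_{s,e}e_S=0$. Since also $W_{s,e}e_S=0$, we get $B_{s,e}=0$ on this block.
\item[(b)] If $\abs{S\cap e}=1$, consider the pair $e_i\wedge\omega$, $e_j\wedge\omega$, which spans an invariant $2$-dimensional subspace. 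Only the $e_i$ (respectively $e_j$) factor contributes, so
\[
A_{s,e}(e_i\wedge\omega)=v\wedge\omega=e_i\wedge\omega-e_j\wedge\omega,
\qquad
A_{s,e}(e_j\wedge\omega)=-e_i\wedge\omega+e_j\wedge\omega.
\]
Thus $A_{s,e}$ acts there by $\begin{pmatrix}1&-1\\-1&1\end{pmatrix}$, which has eigenvalues $0$ and $2$. Also $W_{s,e}=0$ on this block, so $B_{s,e}$ equals this matrix, which is PSD and satisfies $M^2=2M$.
\item[(c)] If $e\subseteq S$, then
\[
A_{s,e}(e_i\wedge e_j\wedge\omega)=v\wedge e_j\wedge\omega+e_i\wedge(-v)\wedge\omega=2\,e_i\wedge e_j\wedge\omega,
\]
using $e_j\wedge e_j=0=e_i\wedge e_i$. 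So $A_{s,e}=2$ and $W_{s,e}=1$ here, giving $B_{s,e}=1$. Then $B_{s,e}^2=1=2\cdot1-1$.
\end{itemize}

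\textbf{Assembling.} On every block $B_{s,e}$ is a symmetric matrix with eigenvalues in $\{0,1,2\}$, so $B_{s,e}\succeq0$. The identity $B_{s,e}^2=2B_{s,e}-W_{s,e}$ holds on each block: in (a) it reads $0=0-0$, in (b) it is $M^2=2M$ with $W=0$, and in (c) it is $1=2-1$. Since the blocks give an orthogonal decomposition of $H_s$ reducing all operators involved, the identity holds globally.

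\textbf{Main obstacle.} The only delicate point is sign bookkeeping in the wedge products. Reordering $e_S$ to put $e_i,e_j$ first only changes the basis vector by a sign, and this does not affect the operator identities. Case (c) is where the factor $2$ arises: it comes from the two contributions $v\wedge e_j$ and $-e_i\wedge v$, which add rather than cancel. I would verify that step carefully.
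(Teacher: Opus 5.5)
Your proposal is correct and takes essentially the same route as the paper: it splits $H_s([n])$ by $|S\cap e|\in\{0,1,2\}$, finds $B_{s,e}$ to be $0$, the $2\times 2$ matrix $\begin{pmatrix}1&-1\\-1&1\end{pmatrix}$, and the scalar $1$ on the respective blocks, and then checks positivity and $B_{s,e}^2=2B_{s,e}-W_{s,e}$ blockwise. The only difference is that you carry out the wedge computations explicitly: you use the basis $e_i\wedge\omega,\ e_j\wedge\omega$ and derive the factor $2$ in the case $e\subseteq S$. The paper simply asserts both of these "up to sign."
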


\begin{proof}
Since the single-edge operator only acts on the two coordinates $i$ and $j$, the space
$H_s([n])$ decomposes according to whether the indexing set $S$
contains neither endpoint of $e$, both endpoints of $e$, or exactly one endpoint of $e$.
We check the claim on each of these subspaces.

First, if $S$ contains neither endpoint of $e$, then $A_{s,e}e_S=0$, $W_{s,e}e_S=0$, and
$B_{s,e}e_S=0$.
Thus both sides of the identity vanish on this subspace.
Second, if $S$ contains both endpoints of $e$, then the one-edge Laplacian contributes
eigenvalue $2$ to the additive compound, so $A_{s,e}e_S=2e_S$.
Also $W_{s,e}e_S=e_S$, and hence
$B_{s,e}e_S=e_S$.
Here $B_{s,e}=W_{s,e}=I$, so $B_{s,e}^2=I=2B_{s,e}-W_{s,e}$.
It remains to consider the subspace spanned by
$e_{R\cup\{i\}}$ and $e_{R\cup\{j\}}$, where
$R\subseteq [n]\setminus\{i,j\}$ and $\abs{R}=s-1$.
Up to a possible change in signs from a suitable change of basis, $A_{s,e}=B_{s,e}$
has matrix
\[
    \begin{pmatrix}
        1 & -1\\
        -1 & 1
    \end{pmatrix},
\]
whose eigenvalues are $0$ and $2$.
On this subspace $W_{s,e}=0$.
Moreover, this matrix squares to twice itself, so again
$B_{s,e}^2=2B_{s,e}-W_{s,e}$.
Over all cases, the only eigenvalues of $B_{s,e}$ are $0,1,2$, so
$B_{s,e}\succeq 0$.
\end{proof}

Fix an orientation of every edge.
For an oriented edge $e=(i,j)$, set $V_e=[n]\setminus\{i,j\}$.
Let $\iota_i$ denote contraction by $e_i$, so that 
\[\iota_i(e_{v_1}\wedge\cdots\wedge e_{v_s})=
    \begin{cases}
        (-1)^{p-1}
        e_{v_1}\wedge\cdots\wedge \widehat{e_{v_p}}\wedge\cdots\wedge e_{v_s},
        & \text{if } i=v_p,\\
        0, & \text{if } i\notin\{v_1,\ldots,v_s\}.
    \end{cases}
\]
We also let $\pi_e^{(t)}:H_t([n])\to H_t(V_e)$ be the projection defined by
\[\pi_e^{(t)}e_S=
    \begin{cases}
        e_S, & S\subseteq V_e,\\
        0, & S\not\subseteq V_e.
    \end{cases}
\]
That is, $\pi_e^{(t)}$ keeps only the coordinates indexed by $t$-sets disjoint from the endpoints of $e$.
For $x\in H_r([n])$, define
\[
    u_e(x)=\pi_e^{(r-1)}(\iota_i-\iota_j)x\in H_{r-1}(V_e),
    \qquad
    w_e(x)=\pi_e^{(r-2)}\iota_j\iota_i x\in H_{r-2}(V_e).
\]
We define the ``edge gradient'' to be
\[
    G_e x=\Paren{u_e(x),w_e(x)}
    \in H_{r-1}(V_e)\oplus H_{r-2}(V_e).
\]
Thus $u_e$ is the signed version of the difference between the two coefficients indexed by
$R\cup\{i\}$ and $R\cup\{j\}$, while $w_e$ records the signed coefficient indexed by
$P\cup\{i,j\}$.
The adjoint of $G_e:H_r([n])\to H_{r-1}(V_e)\oplus H_{r-2}(V_e)$ is
$G_e^*:H_{r-1}(V_e)\oplus H_{r-2}(V_e)\to H_r([n])$, defined by
$\langle G_ex,(u,w)\rangle = \langle x,G_e^*(u,w)\rangle$.

\begin{lemma}
\label{lem:gradient-factor}
For every edge $e$, we have
\begin{equation}
\label{eq:gradient-factor}
    G_e^*G_e=B_{r,e}.
\end{equation}
Consequently,
\begin{equation}
\label{eq:gradient-sum}
    \sum_{e\in E(G)}\Norm{G_e x}_2^2
    =
    \langle x,B_r(G)x\rangle.
\end{equation}
\end{lemma}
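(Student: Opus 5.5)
The plan is to verify \eqref{eq:gradient-factor} by decomposing $H_r([n])$ into the same three invariant pieces used in the proof of \cref{lem:one-edge}, indexed by how the $r$-set $S$ meets $e=\{i,j\}$. Then \eqref{eq:gradient-sum} follows at once. Indeed, $\Norm{G_ex}_2^2=\langle G_ex,G_ex\rangle=\langle x,G_e^*G_ex\rangle=\langle x,B_{r,e}x\rangle$, and summing over $e$ with $B_r(G)=\sum_e B_{r,e}$ gives the identity.

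To check \eqref{eq:gradient-factor}, I will compute $\langle G_e x, G_e y\rangle=\langle u_e(x),u_e(y)\rangle+\langle w_e(x),w_e(y)\rangle$ for basis vectors $x=e_S$, $y=e_T$, and compare with $\langle e_S,B_{r,e}e_T\rangle$.

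\emph{Case $S\cap e=\emptyset$.} Both contractions kill $e_S$, so $G_ee_S=0$. This matches $B_{r,e}e_S=0$ from the proof of \cref{lem:one-edge}.

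\emph{Case $e\subseteq S$.} Here $\iota_i e_S$ and $\iota_j e_S$ are $\pm e_{S\setminus\{i\}}$ and $\pm e_{S\setminus\{j\}}$. Each of these sets still contains an endpoint of $e$, so $\pi_e^{(r-1)}$ kills them and $u_e(e_S)=0$. On the other hand, $\iota_j\iota_i e_S=\pm e_{S\setminus\{i,j\}}$ with $S\setminus\{i,j\}\subseteq V_e$, so $w_e(e_S)=\pm e_{S\setminus\{i,j\}}$. Distinct such $S$ give distinct $P=S\setminus\{i,j\}$, hence $G_e^*G_e$ acts as the identity on this block. This matches $B_{r,e}=I$ there.

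\emph{Case $\abs{S\cap e}=1$.} Write $S=R\cup\{i\}$ or $R\cup\{j\}$ with $R\subseteq V_e$. Now $w_e$ vanishes, since $\iota_j\iota_i$ needs both endpoints. Also $u_e(e_{R\cup\{i\}})=\epsilon_i e_R$ and $u_e(e_{R\cup\{j\}})=-\epsilon_j e_R$, where $\epsilon_i,\epsilon_j\in\{\pm1\}$ are the signs $(-1)^{p-1}$ from the position of $i$ (resp.\ $j$) in the ordered set. So on the two-dimensional block spanned by $e_{R\cup\{i\}},e_{R\cup\{j\}}$, the Gram matrix of $G_e$ is
\[
\begin{pmatrix} 1 & -\epsilon_i\epsilon_j\\ -\epsilon_i\epsilon_j & 1\end{pmatrix},
\]
and different $R$ are orthogonal.

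The main obstacle is confirming that this sign agrees exactly with the off-diagonal entry of $A_{r,e}=L_e^{[r]}$ on the block. The proof of \cref{lem:one-edge} only asserted this ``up to a change of basis'', which is not enough for an identity.

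To settle it, I will compute directly. We have $L_e e_i=e_i-e_j$, so $L_e^{[r]}e_{R\cup\{i\}}$ equals $e_{R\cup\{i\}}$ plus the wedge in which $e_i$ is replaced by $-e_j$ in its slot. Reordering $-e_j$ into sorted position gives $-\epsilon e_{R\cup\{j\}}$. The reordering sign is $\epsilon=\epsilon_i\epsilon_j$, because both signs are computed relative to the same $R$ and the slot of $i$ versus $j$ differs only by transpositions past elements of $R$ lying between $i$ and $j$.

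A cleaner way to see the whole identity is coordinate-free. Note $L_e=vv^\top$ with $v=e_i-e_j$. Then $(vv^\top)^{[r]}=\varepsilon_v\iota_v$, where $\varepsilon_v=v\wedge\cdot$ and $\iota_v=\iota_i-\iota_j$. The canonical anticommutation relations then give $\varepsilon_v\iota_v=\iota_v^*\iota_v$ with $\iota_v^*=\varepsilon_v$. Splitting $\iota_v x$ into its part supported off $e$ (which is $u_e$) and its part containing an endpoint of $e$ recovers $u_e^*u_e$ plus an extra term. That extra term equals $\mathrm{diag}(1)$ on $e\subseteq S$, i.e.\ $2W-W$, which matches $A_{r,e}-W_{r,e}+w_e^*w_e-W_{r,e}$. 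I would use this as a cross-check on the sign bookkeeping of the case analysis.

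Since the identity holds on every block and the blocks are mutually orthogonal, this establishes $G_e^*G_e=B_{r,e}$.
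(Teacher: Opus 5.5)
Your proof is correct and takes the paper's route. You use the same three-block decomposition by $\abs{S\cap e}$, and your explicit check that $A_{r,e}$ and $G_e^*G_e$ both have off-diagonal entry $-\epsilon_i\epsilon_j$ on the span of $e_{R\cup\{i\}},e_{R\cup\{j\}}$ makes precise the sign match that the paper only asserts up to a change of basis.

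There is one slip in your optional coordinate-free cross-check, which the main argument does not rely on. The leftover term $\iota_v^*(I-\pi_e^{(r-1)})\iota_v$ equals $2W_{r,e}$, not $W_{r,e}$. The correct bookkeeping is $A_{r,e}=u_e^*u_e+2W_{r,e}$, and hence $B_{r,e}=u_e^*u_e+W_{r,e}=u_e^*u_e+w_e^*w_e=G_e^*G_e$.
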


\begin{proof}
As before, we split into the three cases determined by $S\cap e$.
First, if $S$ contains neither endpoint of $e$, then both components of $G_e$ vanish,
which agrees with $B_{r,e}=0$.
Second, if $S$ contains both endpoints, then $u_e$ vanishes and $w_e$ records that
coefficient with a sign, agreeing with $B_{r,e}=I$.

Finally, on the span of $e_{R\cup\{i\}}$ and $e_{R\cup\{j\}}$, the map $G_e$
becomes $\alpha-\epsilon\beta$, with $\epsilon\in\{\pm1\}$ determined by the ordering
in the exterior product.
The associated quadratic form is $(\alpha-\epsilon\beta)^2$, agreeing with the
quadratic form of $B_{r,e}$.
Thus $G_e^*G_e=B_{r,e}$, and summing over $e\in E(G)$ gives
\eqref{eq:gradient-sum}.
\end{proof}

For an edge $e=\{i,j\}$, write $G-\{i,j\}=G[[n]\setminus\{i,j\}]$, and let
$a_e=d_i+d_j-2$ be the number of edges adjacent to $e$.
For $x\in H_r([n])$, define
\[ T_r(x)= \sum_{\substack{e,f\in E(G)\\ e\neq f,\ e\cap f\neq\emptyset}}
    \langle x,B_{r,e}B_{r,f}x\rangle.\]
As we will see, this term is the analogue of the twist term in the previous section.

\begin{lemma}[Derivative identity]
\label{lem:derivative-identity}
Let $x\in H_r([n])$. Then
\begin{align*}
    \sum_{e\in E(G)}
    \langle G_e x,G_e(B_r(G)x)\rangle
    &= \sum_{e=\{i,j\}\in E(G)} \left\langle  u_e(x), \Paren{2I+B_{r-1}(G-\{i,j\})}u_e(x) \right\rangle \\
    &\quad+ \sum_{e=\{i,j\}\in E(G)}\left\langle w_e(x), \Paren{I+B_{r-2}(G-\{i,j\})}w_e(x) \right\rangle + T_r(x).
\end{align*}
\end{lemma}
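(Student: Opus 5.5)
Since $G_e^*G_e=B_{r,e}$ by \cref{lem:gradient-factor}, the left-hand side equals
\[
\sum_{e}\langle x,B_{r,e}B_r(G)x\rangle=\sum_{e}\sum_{f}\langle x,B_{r,e}B_{r,f}x\rangle .
\]
My plan is to split the double sum over ordered pairs $(e,f)$ into three classes: $f=e$, $f$ adjacent to $e$ (sharing one endpoint), and $f$ disjoint from $e$. The adjacent pairs contribute exactly $T_r(x)$ by definition, so that class requires no work. For the diagonal terms, \cref{lem:one-edge} gives
\[
B_{r,e}^2=2B_{r,e}-W_{r,e},
\]
hence $\langle x,B_{r,e}^2x\rangle=2\langle x,B_{r,e}x\rangle-\langle x,W_{r,e}x\rangle$. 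By \cref{lem:gradient-factor}, $\langle x,B_{r,e}x\rangle=\norm{u_e}^2+\norm{w_e}^2$. Moreover, $\langle x,W_{r,e}x\rangle=\norm{w_e}^2$, since $W_{r,e}$ projects onto the coordinates $S\supseteq e$ and $\iota_j\iota_i$ is a signed isometry from those coordinates onto $H_{r-2}(V_e)$. The diagonal class therefore contributes $2\norm{u_e}^2+\norm{w_e}^2$. This accounts for the $2I$ and $I$ terms in the statement.

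The main step is the disjoint class. Fix $e=(i,j)$. I will show
\[
\sum_{f\cap e=\emptyset}\langle x,B_{r,e}B_{r,f}x\rangle=\langle G_ex,\,G_e B_{r,f}x\rangle
\]
summed over such $f$, which reduces the claim to an intertwining relation
\[
G_e B_{r,f}=\bigl(B_{r-1,f}\oplus B_{r-2,f}\bigr)G_e \quad\text{for } f\cap e=\emptyset,
\]
where on the right $B_{\cdot,f}$ is the operator of $f$ viewed inside $H_\bullet(V_e)$. Summing over $f\subseteq V_e$, i.e.\ over the edges of $G-\{i,j\}$, then yields $B_{r-1}(G-\{i,j\})u_e+B_{r-2}(G-\{i,j\})w_e$ paired against $(u_e,w_e)$, which is the claimed formula.

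To prove the intertwining, I will check its two components separately.
\begin{itemize}
\item \textbf{$A$-part.} $A_{r,f}=L_f^{[r]}$ is a derivation built from $L_f$, which acts only on coordinates $u,v\notin\{i,j\}$. Since $\langle e_i,L_fy\rangle=0$, it follows that $\iota_i L_f^{[r]}=L_f^{[r-1]}\iota_i$, and likewise for $\iota_j$. These operators also commute with the projections $\pi_e$, because $L_f^{[\cdot]}$ preserves the property that an index set avoids $\{i,j\}$.
\item \textbf{$W$-part.} $W_{r,f}$ is diagonal and tests only whether $f\subseteq S$. Removing $i$ or $j$ from $S$ does not change this test, so $W$ also commutes with $\iota_i$, $\iota_j$ and $\pi_e$.
\end{itemize}
Combining the two parts gives $u_e(B_{r,f}x)=B_{r-1,f}u_e(x)$ and $w_e(B_{r,f}x)=B_{r-2,f}w_e(x)$.

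I expect the main obstacle to be the bookkeeping in the $A$-part: checking that the contraction sign $(-1)^{p-1}$ commutes correctly with the derivation $L_f^{[r]}$. This amounts to the standard identity $\iota_v(My_1\wedge\cdots)$-type commutation, valid whenever $M^\top e_v=0$. I would verify it on decomposable vectors.
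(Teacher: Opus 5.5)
Your proposal is correct and follows essentially the paper's proof. You use the same three-way split of $f$ relative to $e$, and the same commutation $u_e(B_{r,f}x)=B_{r-1,f}u_e(x)$, $w_e(B_{r,f}x)=B_{r-2,f}w_e(x)$ for disjoint $f$ (which you justify in more detail than the paper), together with \cref{lem:gradient-factor} for adjacent pairs; the only cosmetic difference is that you get the diagonal contribution $2\Norm{u_e}_2^2+\Norm{w_e}_2^2$ from $B_{r,e}^2=2B_{r,e}-W_{r,e}$ in \cref{lem:one-edge}, whereas the paper reads it off directly from $G_e(B_{r,e}x)=(2u_e(x),w_e(x))$.
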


\begin{proof}
We fix $e=(i,j)$ and expand $B_r(G)=\sum_{f\in E(G)}B_{r,f}$.
First, if $f$ is disjoint from $e$, then $B_{r,f}$ acts only on coordinates in $V_e$, while the
contractions defining $u_e$ and $w_e$ remove $i$ and $j$. Hence, we have
\[
    u_e(B_{r,f}x)=B_{r-1,f}u_e(x),
    \qquad
    w_e(B_{r,f}x)=B_{r-2,f}w_e(x).
\]
Summing over such $f$ gives the operators $B_{r-1}(G-\{i,j\})$ and
$B_{r-2}(G-\{i,j\})$ respectively.

Second, if $f=e$, then
\[
    G_e(B_{r,e}x)=\Paren{2u_e(x),w_e(x)}.
\]
Finally, if $f$ is adjacent to $e$ and $f\neq e$, then by
\eqref{eq:gradient-factor},
\[
    \langle G_e x,G_e(B_{r,f}x)\rangle
    =
    \langle x,B_{r,e}B_{r,f}x\rangle.
\]
Summing over all $e$ and all adjacent $f\neq e$ gives $T_r(x)$.
\end{proof}

\paragraph{Bounding $T_r$ with stars.}

For a vertex $v$, write $E_v=\Braces{e\in E(G):v\in e}$ and
$S_v=\sum_{e\in E_v}B_{r,e}$.

\begin{lemma}\label{lem:star}
If $v$ has degree $d$, then $\lambda_{\max}(S_v)\leq d+\sqrt{r}$.
\end{lemma}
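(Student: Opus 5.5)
The plan is to compute the spectrum of $S_v$ exactly, or nearly so, by reducing it to a $2\times 2$ block operator. The key fact will be that the off-diagonal block is a fermionic contraction, whose square is a multiple of a projection.

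\textbf{Step 1: localize to the star.} Let $N=N(v)$ be the set of $d$ neighbours of $v$, and let $U=\{v\}\cup N$. Every $B_{r,e}$ with $e\in E_v$ acts only on the coordinates in $U$. So $H_r([n])$ splits as the direct sum, over $t$ and over $(r-t)$-sets $P\subseteq[n]\setminus U$, of the subspaces spanned by $e_{S}$ with $S\cap U$ a $t$-set and $S\setminus U=P$. Each such subspace is invariant under $S_v$. Up to the sign bookkeeping of the exterior product (a graded tensor decomposition $H_r([n])\cong\bigoplus_t H_t(U)\otimes H_{r-t}([n]\setminus U)$, under which $S_v$ acts as $S_v^{U}\otimes I$ up to a fixed sign twist), it suffices to bound $\lambda_{\max}$ of the star operator $\sum_{e\ni v}(L_e^{[t]}-W_{t,e})$ on $H_t(U)$, uniformly over $0\le t\le r$.

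\textbf{Step 2: block form.} Split $H_t(U)=X_0\oplus X_1$. Here $X_0$ is spanned by $e_T$ with $T\subseteq N$ and $|T|=t$, and $X_1$ is spanned by $e_{\{v\}\cup T'}$ with $T'\subseteq N$ and $|T'|=t-1$. The diagonal entries come from \cref{lem:one-edge}:
\begin{itemize}
\item On $X_0$, each of the $t$ occupied leaves gives an edge with exactly one endpoint in $S$, contributing $1$, so the diagonal is $tI$.
\item On $X_1$, each occupied leaf gives an edge inside $S$, contributing $2-1=1$, and each empty leaf contributes $1$, so the diagonal is $dI$.
\end{itemize}
The off-diagonal entries are $\pm1$ and link $e_T$ with $e_{\{v\}\cup T\setminus\{\ell\}}$; there are no entries within $X_0$ or within $X_1$. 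Tracking the exterior signs, the off-diagonal block is (up to a global sign) the contraction $C=\iota_u\colon H_t(N)\to H_{t-1}(N)$ with $u=\sum_{\ell\in N}e_\ell$. Hence
\[
S_v^U=\begin{pmatrix} tI & \pm C^{*}\\ \pm C & dI\end{pmatrix}.
\]

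\textbf{Step 3: spectrum of $C$.} The canonical anticommutation relation $\iota_u\iota_u^*+\iota_u^*\iota_u=\|u\|^2=d$, together with $\iota_u^2=0$, gives $(C^*C)^2=C^*(d-CC^*)C=d\,C^*C$. So $C^*C$ has eigenvalues in $\{0,d\}$. The block matrix then decouples along the singular vectors of $C$ into $2\times2$ blocks $\begin{pmatrix} t&\sqrt d\\ \sqrt d& d\end{pmatrix}$, together with scalar blocks $t$ or $d$. Therefore
\[
\lambda_{\max}(S_v)\le \max_{t\le r}\ \tfrac{t+d}{2}+\sqrt{\big(\tfrac{d-t}{2}\big)^2+d}.
\]

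\textbf{Step 4: the elementary inequality.} When $t\le d$, squaring shows that the bound $\le d+\sqrt r$ is equivalent to $d\le (d-t)\sqrt r+r$. This holds because $(d-t)\sqrt r\ge d-t$ and $r\ge t$ (for $r\ge1$; the case $r=0$ is trivial). When $t>d$, the space $X_0$ is zero and only the diagonal $d$ survives.

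I expect the main obstacle to be Step 2's sign bookkeeping. One has to verify that the off-diagonal block is exactly $\pm\iota_u$, with consistent signs, rather than some other signing of the inclusion matrix. With the unsigned inclusion matrix, the norm would be $\sqrt{t(d-t+1)}$, and the bound would fail. The same care is needed to check that the tensoring with $H_{r-t}([n]\setminus U)$ only conjugates by a diagonal $\pm1$ matrix. I would handle both points by ordering $v$ first and writing everything in terms of the contraction operators $\iota_\ell$ and their adjoints (wedging with $e_\ell$).
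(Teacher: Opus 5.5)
Your proposal is correct and is essentially the paper's argument: the same localization to the star, the same block form with diagonal blocks $tI$ and $dI$ and off-diagonal block the contraction by $\sum_{\ell\in N}e_\ell$, and the same reduction to the $2\times 2$ matrix $\begin{pmatrix} t&\sqrt d\\ \sqrt d&d\end{pmatrix}$. The only difference is that you compute the spectrum of $C^*C$ exactly via the anticommutation relation, whereas the paper just uses $\|C\|\le\sqrt d$ to bound the Rayleigh quotient, which already suffices; note also that the middle expression in your Step 3 should read $C^*(d-\iota_u^*\iota_u)C$, with $\iota_u^*\iota_u$ acting on degree $t-1$, so that $\iota_u^2=0$ kills the second term.
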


We note that a variant of this lemma was established in \cite[Lemma 5.1]{Lew26}, though we include its short proof here for completeness. 

\begin{proof}
Let $X$ be the set of the $d$ neighbors of $v$. It is clear that it suffices to analyze the exterior space generated by $\{v\}\cup X$.
Fix $q$ and consider the corresponding space 
\[ \bigwedge\nolimits^q\R^X
    \oplus
    \Paren{e_v\wedge \bigwedge\nolimits^{q - 1}\R^X}. \]
The first summand consists of configurations not containing the center $v$, and the second consists of configurations containing $v$. On $\bigwedge\nolimits^q\R^X$, each chosen leaf contributes one edge, so the
diagonal part is $qI$. On $e_v\wedge \bigwedge\nolimits^{q - 1}\R^X$, the diagonal part is $dI$.
Up to harmless sign changes of exterior basis vectors, the off-diagonal part corresponds to contraction by $\mathbf{1}_X=\sum_{x\in X}e_x$. Therefore, $S_v$ has block matrix
\[    M_q=
    \begin{pmatrix}
        qI & C_q^*\\
        C_q & dI
    \end{pmatrix},\]
where $C_q:\bigwedge\nolimits^q\R^X\to \bigwedge\nolimits^{q - 1}\R^X$ is contraction by $\mathbf{1}_X$, again up to changes of signed basis.
Since $\Norm{\mathbf{1}_X}_2=\sqrt{d}$ and contraction by a unit vector has operator norm at most $1$ on exterior space, we have $\Norm{C_q}\leq \sqrt{d}$. Thus, the Rayleigh quotient of $M_q$ is at most that of $\begin{pmatrix}
q & \sqrt{d}\\
\sqrt{d} & d
\end{pmatrix}$.  That is, for $q \in [d]$, we have 
\[\lambda_{\max}(M_q) \leq \frac{d+q+\sqrt{(d-q)^2+4d}}{2} \leq d+\sqrt{q}\leq d+\sqrt{r}.
\]
Finally, the edge cases: If $q=0$, the only eigenvalue is $0$; if $q=d+1$, then the eigenvalue is $d$, proving the desired.
\end{proof}

\begin{lemma}
\label{lem:Tr-estimate}
For every $x\in H_r([n])$,
\[T_r(x)
    \leq
    \sum_{e\in E(G)}
    (a_e+2\sqrt{r}-2)\Norm{u_e(x)}_2^2
    +
    \sum_{e\in E(G)}
    (a_e+2\sqrt{r})\Norm{w_e(x)}_2^2.\]
\end{lemma}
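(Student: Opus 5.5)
The plan is to reorganize $T_r(x)$ around the shared vertex of each adjacent pair, and then control each star with \cref{lem:star}. Two distinct edges $e\neq f$ with $e\cap f\neq\emptyset$ share exactly one vertex $v$. Hence each ordered pair is counted once in the star $E_v$, and
\[
    T_r(x)=\sum_{v}\sum_{\substack{e,f\in E_v\\ e\neq f}}\langle x,B_{r,e}B_{r,f}x\rangle
    =\sum_v\Big(\langle x,S_v^2x\rangle-\sum_{e\in E_v}\langle x,B_{r,e}^2x\rangle\Big).
\]

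\textbf{Removing the diagonal terms.} I will use \cref{lem:one-edge}, which gives $B_{r,e}^2=2B_{r,e}-W_{r,e}$. Summing over $e\in E_v$ yields
\[
    \sum_{e\in E_v}\langle x,B_{r,e}^2x\rangle=2\langle x,S_vx\rangle-\sum_{e\in E_v}\langle x,W_{r,e}x\rangle .
\]

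\textbf{Bounding the square of a star.} Each $B_{r,e}$ is positive semidefinite by \cref{lem:one-edge}, so $S_v\succeq 0$. Therefore $S_v^2\preceq\lambda_{\max}(S_v)\,S_v$, and \cref{lem:star} gives $S_v^2\preceq(d_v+\sqrt r)S_v$. Combining the two steps,
\[
    T_r(x)\leq\sum_v\Big((d_v+\sqrt r-2)\langle x,S_vx\rangle+\sum_{e\in E_v}\langle x,W_{r,e}x\rangle\Big).
\]

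\textbf{Regrouping by edges.} By \cref{lem:gradient-factor},
\[
    \langle x,S_vx\rangle=\sum_{e\in E_v}\|G_ex\|_2^2=\sum_{e\in E_v}\big(\|u_e(x)\|_2^2+\|w_e(x)\|_2^2\big).
\]
Each edge $e=\{i,j\}$ lies in exactly the two stars $E_i$ and $E_j$. So the coefficient of $\|G_ex\|_2^2$ becomes
\[
    (d_i+\sqrt r-2)+(d_j+\sqrt r-2)=a_e+2\sqrt r-2,
\]
and the term $\langle x,W_{r,e}x\rangle$ appears twice.

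\textbf{Identifying the $W$ term.} The remaining step is to show $\langle x,W_{r,e}x\rangle=\|w_e(x)\|_2^2$. Here $W_{r,e}$ is the coordinate projection onto the basis vectors $e_S$ with $\{i,j\}\subseteq S$. On those coordinates, $\iota_j\iota_i$ maps $e_{P\cup\{i,j\}}\mapsto\pm e_P$ bijectively onto the basis of $H_{r-2}(V_e)$. The projection $\pi_e^{(r-2)}$ keeps exactly these images, and it kills nothing else relevant, because $\iota_j\iota_ix$ only involves such coordinates. Hence $w_e$ is a signed isometric relabelling of those coordinates.

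\textbf{Conclusion.} This adds $2\|w_e(x)\|_2^2$ to the $w$-coefficient. The result is the coefficients $a_e+2\sqrt r-2$ for $\|u_e(x)\|_2^2$ and $a_e+2\sqrt r$ for $\|w_e(x)\|_2^2$, as claimed. The only delicate point is the operator inequality $S_v^2\preceq\lambda_{\max}(S_v)S_v$, which needs $S_v\succeq0$; the rest is bookkeeping.
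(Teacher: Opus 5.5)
Your proposal is correct and follows the paper's proof essentially step for step. Both arguments group $T_r$ by shared vertex into $S_v^2-\sum_{e\in E_v}B_{r,e}^2$, apply \cref{lem:one-edge} together with $S_v^2\preceq(d_v+\sqrt r)S_v$ from \cref{lem:star}, and redistribute over the two endpoints of each edge via \cref{lem:gradient-factor} and $\langle x,W_{r,e}x\rangle=\Norm{w_e(x)}_2^2$; your justification of that last identity, which the paper only asserts, is also correct.
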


\begin{proof}
We rewrite $T_r$ by grouping terms according to their common vertex
\[
    T_r(x)
    =
    \sum_{v\in[n]}
    \left\langle
        x,
        \Paren{
            S_v^2-\sum_{e\in E_v}B_{r,e}^2
        }x
    \right\rangle.
\]
By \cref{lem:star}, $0\leq S_v\leq (d_v+\sqrt{r})I$, so that $S_v^2\leq (d_v+\sqrt{r})S_v.$
Combining this with \cref{lem:one-edge}, we have
\[S_v^2-\sum_{e\in E_v}B_{r,e}^2= S_v^2-2S_v+\sum_{e\in E_v}W_{r,e} \leq
    (d_v+\sqrt{r}-2)S_v+\sum_{e\in E_v}W_{r,e}.
\]
Taking the quadratic form at $x$ and summing over $v$ gives
\[ T_r(x)\leq\sum_{v\in[n]}(d_v+\sqrt{r}-2)
    \sum_{e\in E_v}\langle x,B_{r,e}x\rangle
    +
    \sum_{v\in[n]}\sum_{e\in E_v}\langle x,W_{r,e}x\rangle.
\]
By \cref{lem:gradient-factor}, $\langle x,B_{r,e}x\rangle = \Norm{u_e(x)}_2^2+\Norm{w_e(x)}_2^2$.
We also have $ \langle x,W_{r,e}x\rangle=\Norm{w_e(x)}_2^2$.
Combining all these gives the desired conclusion.
\end{proof}

\paragraph{Induction.} Define $\theta_0=0$ and
$\theta_r=\theta_{r-1}+2\sqrt{r}-1$ for $r\geq 1$.
Equivalently,
\begin{equation}
\label{eq:theta}
    \theta_r=2\sum_{s=1}^r\sqrt{s}-r.
\end{equation}

\begin{proposition}
\label{prop:inductive-step}
For every simple graph $G$ with $m$ edges and every $r\geq 1$, we have
\[\lambda_{\max}(B_r(G))\leq m+\theta_r.\]
\end{proposition}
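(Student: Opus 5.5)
The plan is to prove \cref{prop:inductive-step} by strong induction on $r$, simultaneously for all simple graphs, mirroring the proof of \cref{thm:token-lap-bound}. The differences are that the edge gradient $G_e$ now has the two components $(u_e,w_e)$, and that the twist term is controlled through \cref{lem:Tr-estimate} instead of by a triangle inequality.

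\emph{Base cases and setup.} For the base cases we use $B_0(G)=0$ (so the bound $m+\theta_0=m$ holds trivially) and the convention $H_{-1}=\{0\}$ (so $w_e\equiv 0$ when $r=1$). Let $r\geq 1$, and let $x$ be an eigenvector of $B_r(G)$ with eigenvalue $\lambda$. Since $B_r(G)$ is a sum of positive semidefinite operators (\cref{lem:one-edge}), we have $\lambda\geq 0$. By \cref{lem:gradient-factor},
\[
\sum_e\langle G_ex,G_e(B_r(G)x)\rangle=\lambda\sum_e\Norm{G_ex}_2^2=\lambda\sum_e\Paren{\Norm{u_e(x)}_2^2+\Norm{w_e(x)}_2^2}.
\]
If this sum vanishes, then $\langle x,B_r(G)x\rangle=0$, so $\lambda=0$ and there is nothing to prove. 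Otherwise we bound the left-hand side from above using \cref{lem:derivative-identity}.

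\emph{Bounding each term.} For $e=\{i,j\}$, the graph $G-\{i,j\}$ has $m-a_e-1$ edges. The induction hypothesis at levels $r-1$ and $r-2$ (both operators are PSD) gives
\[
\langle u_e,(2I+B_{r-1}(G-\{i,j\}))u_e\rangle\leq(m-a_e+1+\theta_{r-1})\Norm{u_e}_2^2,
\]
\[
\langle w_e,(I+B_{r-2}(G-\{i,j\}))w_e\rangle\leq(m-a_e+\theta_{r-2})\Norm{w_e}_2^2.
\]
We add \cref{lem:Tr-estimate} for $T_r(x)$. Then the $a_e$ terms cancel exactly, as the degree terms did in the token-graph proof, and we get:
\begin{itemize}
\item coefficient of $\Norm{u_e}_2^2$: $m+\theta_{r-1}+2\sqrt r-1=m+\theta_r$;
\item coefficient of $\Norm{w_e}_2^2$: $m+\theta_{r-2}+2\sqrt r$.
\end{itemize}

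\emph{Closing the induction.} It remains to check $\theta_{r-2}+2\sqrt r\leq\theta_r$. This is equivalent to $2\sqrt{r-1}-1\geq 1$, which holds for $r\geq2$. For $r=1$ the $w$-terms are absent. Hence $\lambda\sum_e\Norm{G_ex}_2^2\leq(m+\theta_r)\sum_e\Norm{G_ex}_2^2$, and dividing gives $\lambda\leq m+\theta_r$.

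The main obstacle is bookkeeping rather than any new idea. One must make sure the induction hypothesis is applied to the correct levels ($r-1$ for $u_e$, $r-2$ for $w_e$) on the vertex-deleted graph. One must also verify the small-$r$ conventions: $B_0=0$, the empty space $H_{-1}$, and the case $r-2=0$ where $w_e$ is a scalar. Finally, the slack in the $w$-coefficient must genuinely be nonnegative, which is exactly where the recursion $\theta_r=\theta_{r-1}+2\sqrt r-1$ is used.
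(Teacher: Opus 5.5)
Your proposal is correct and follows essentially the same argument as the paper: the eigenvector/edge-gradient identity, the induction hypothesis at levels $r-1$ and $r-2$ on $G-\{i,j\}$, the bound from \cref{lem:Tr-estimate}, and the same coefficient check $\theta_{r-2}+2\sqrt r\leq\theta_r$ for $r\geq 2$. The only cosmetic difference is at the bottom of the induction. The paper takes $r=1$ as its base case and cites Anderson--Morley. You instead start from $B_0=0$ and run the gradient argument at $r=1$, where $w_e\equiv 0$ and the star lemma gives $d+1$. That choice also makes explicit the level-$0$ hypothesis the paper uses implicitly at $r=2$.
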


\begin{proof}
The base case is $r=1$, where
$\lambda_{\max}(B_1(G))=\lambda_{\max}(L(G))\leq m+1=m+\theta_1$.
This follows for instance from \cite{AM85}. It suffices to prove the inductive step for $r \geq 2$. 
Let $\nu=\lambda_{\max}(B_r(G))$ and let $x\in H_r([n])$ be a corresponding unit
eigenvector $B_r(G)x=\nu x$.
Since $B_r(G)\succeq 0$, we have $\nu\geq 0$ and we may assume $\nu > 0$.
By \eqref{eq:gradient-sum}, since $G_e(B_r(G)x)=\nu G_e x$, we have
\begin{equation}
\label{eq:eigen-gradient}
    \nu\sum_{e\in E(G)}\Norm{G_e x}_2^2
    =
    \sum_{e\in E(G)}
    \langle G_e x,G_e(B_r(G)x)\rangle.
\end{equation}
For $e=\{i,j\}$, by the induction hypothesis applied to $G-\{i,j\}$, we have
\[
    \lambda_{\max}(B_{r-1}(G-\{i,j\}))\leq m-a_e-1+\theta_{r-1}, \quad\text{and}\quad
\lambda_{\max}(B_{r-2}(G-\{i,j\}))\leq m-a_e-1+\theta_{r-2}.
\]
Therefore, we have
\[\left\langle u_e, \Paren{2I+B_{r-1}(G-\{i,j\})}u_e \right\rangle \leq
    (m-a_e+1+\theta_{r-1})\Norm{u_e}_2^2,\]
and
\[\left\langle w_e, \Paren{I+B_{r-2}(G-\{i,j\})}w_e \right\rangle\leq (m-a_e+\theta_{r-2})\Norm{w_e}_2^2.\]
Combining these estimates with \cref{lem:derivative-identity} and \cref{lem:Tr-estimate}, we obtain
\begin{align*}
    \nu\sum_e\Norm{G_e x}_2^2 &\leq \sum_e
    \Paren{ m-a_e+1+\theta_{r-1} +a_e+2\sqrt{r}-2}\Norm{u_e}_2^2 \\
    &\quad + \sum_e \Paren{ m-a_e+\theta_{r-2}+a_e+2\sqrt{r}}\Norm{w_e}_2^2.
\end{align*}
The coefficient of $\Norm{u_e}_2^2$ is $m+\theta_{r-1}+2\sqrt{r}-1=m+\theta_r$,
while the coefficient of $\Norm{w_e}_2^2$, since $r \geq 2$, is
$m+\theta_{r-2}+2\sqrt{r}\leq m+\theta_{r}$.
Plugging into \eqref{eq:eigen-gradient}, we get
\[\nu\sum_e\Norm{G_e x}_2^2\leq(m+\theta_{r}) \sum_e\Paren{\Norm{u_e}_2^2+\Norm{w_e}_2^2}= (m+\theta_{r})\sum_e\Norm{G_e x}_2^2,
\]
which implies that $\nu\leq m+\theta_{r}$. This completes the induction.
\end{proof}

\begin{proof}[Proof of \cref{thm:Brouwer}]
Recall that $A_r(G)=B_r(G)+W_r(G)$.
By \eqref{eq:W-bound} and \cref{prop:inductive-step}, we have
\[ \lambda_{\max}(A_r(G))
    \leq
    \lambda_{\max}(B_r(G))+\binom{r}{2}
    \leq
    m+\theta_{r}+\binom{r}{2} = m+ \binom{r+1}{2}+ 2\sum_{s=1}^r(\sqrt{s}-1),
\]
where the last equality is due to \eqref{eq:theta}. The desired conclusion follows from \eqref{eq:additive-spectrum}.
The final numerical bound in the theorem follows from
$\sum_{s=1}^r\sqrt{s}\leq \int_0^r\sqrt{x}\,dx+\sqrt{r}
=\frac{2}{3}r^{3/2}+\sqrt{r}$ and $\sqrt{r}\leq r$.
\end{proof}

\section*{Acknowledgements}

A.B. would like to thank Kunal Marwaha and Sidhanth Mohanty for helpful
discussions in the early stages of this project. Part of this work was done 
while A.B. was visiting the Simons Institute. 

\begingroup
\footnotesize
\printbibliography

@inproceedings{parekh2021application,
  author       = {Ojas Parekh and Kevin Thompson},
  title        = {Application of the Level-2 Quantum Lasserre Hierarchy in Quantum Approximation
                  Algorithms},
  booktitle    = {48th International Colloquium on Automata, Languages, and Programming
                  ({ICALP} 2021)},
  pages        = {102:1--102:20},
  series       = {Leibniz International Proceedings in Informatics ({LIPIcs})},
  volume       = {198},
  editor       = {Nikhil Bansal and Emanuela Merelli and James Worrell},
  publisher    = {Schloss Dagstuhl -- Leibniz-Zentrum f{\"{u}}r Informatik},
  address      = {Dagstuhl, Germany},
  year         = {2021},
  url          = {https://drops.dagstuhl.de/entities/document/10.4230/LIPIcs.ICALP.2021.102},
  doi          = {10.4230/LIPIcs.ICALP.2021.102}
}

@inproceedings{anshu2020beyond,
  author       = {Anurag Anshu and David Gosset and Karen Morenz},
  title        = {Beyond Product State Approximations for a Quantum Analogue of Max Cut},
  booktitle    = {15th Conference on the Theory of Quantum Computation, Communication
                  and Cryptography ({TQC} 2020)},
  pages        = {7:1--7:15},
  series       = {Leibniz International Proceedings in Informatics ({LIPIcs})},
  volume       = {158},
  editor       = {Steven T. Flammia},
  publisher    = {Schloss Dagstuhl -- Leibniz-Zentrum f{\"{u}}r Informatik},
  address      = {Dagstuhl, Germany},
  year         = {2020},
  url          = {https://drops.dagstuhl.de/entities/document/10.4230/LIPIcs.TQC.2020.7},
  doi          = {10.4230/LIPIcs.TQC.2020.7},
  eprint       = {2003.14394},
  archivePrefix = {arXiv},
  primaryClass = {quant-ph}
}

@misc{Feige08,
  author={Feige, Uriel},
  title={Small Linear Dependencies for Binary Vectors of Low Weight},
  year={2008},
  note={Manuscript},
  url={https://www.wisdom.weizmann.ac.il/~feige/mypapers/evencover.pdf}
}

@inproceedings{HsiehKM23,
  title={A Simple and Sharper Proof of the Hypergraph Moore Bound},
  author={Hsieh, Jun-Ting and Kothari, Pravesh K. and Mohanty, Sidhanth},
  booktitle={Proceedings of the 2023 Annual ACM-SIAM Symposium on Discrete Algorithms (SODA)},
  pages={2324--2344},
  year={2023},
  publisher={SIAM},
  doi={10.1137/1.9781611977554.ch89},
  url={https://doi.org/10.1137/1.9781611977554.ch89},
  eprint={2207.10850},
  archivePrefix={arXiv},
  primaryClass={math.CO}
}

@inproceedings{GuruswamiKM22,
  title={Algorithms and Certificates for Boolean CSP Refutation: Smoothed Is No Harder than Random},
  author={Guruswami, Venkatesan and Kothari, Pravesh K. and Manohar, Peter},
  booktitle={Proceedings of the 54th Annual ACM SIGACT Symposium on Theory of Computing},
  pages={678--689},
  year={2022},
  publisher={Association for Computing Machinery},
  doi={10.1145/3519935.3519955},
  url={https://doi.org/10.1145/3519935.3519955},
  eprint={2109.04415},
  archivePrefix={arXiv},
  primaryClass={cs.CC}
}

@article{KothariM23,
  title={An Exponential Lower Bound for Linear 3-Query Locally Correctable Codes},
  author={Kothari, Pravesh K. and Manohar, Peter},
  journal={arXiv preprint arXiv:2311.00558},
  year={2023},
  doi={10.48550/arXiv.2311.00558},
  url={https://arxiv.org/abs/2311.00558},
  eprint={2311.00558},
  archivePrefix={arXiv},
  primaryClass={cs.CC}
}

@article{PiddockM17,
  author       = {Stephen Piddock and
                  Ashley Montanaro},
  title        = {The complexity of antiferromagnetic interactions and 2D lattices},
  journal      = {Quantum Inf. Comput.},
  volume       = {17},
  number       = {7{\&}8},
  pages        = {636--672},
  year         = {2017},
  url          = {https://doi.org/10.26421/QIC17.7-8-6},
  doi          = {10.26421/QIC17.7-8-6},
  bibsource    = {dblp computer science bibliography, https://dblp.org}
}

@inproceedings{HwangNP0W23,
  author       = {Yeongwoo Hwang and
                  Joe Neeman and
                  Ojas Parekh and
                  Kevin Thompson and
                  John Wright},
  editor       = {Nikhil Bansal and
                  Viswanath Nagarajan},
  title        = {Unique Games hardness of Quantum Max-Cut, and a conjectured vector-valued
                  Borell's inequality},
  booktitle    = {Proceedings of the 2023 {ACM-SIAM} Symposium on Discrete Algorithms,
                  {SODA} 2023, Florence, Italy, January 22-25, 2023},
  pages        = {1319--1384},
  publisher    = {{SIAM}},
  year         = {2023},
  url          = {https://doi.org/10.1137/1.9781611977554.ch48},
  doi          = {10.1137/1.9781611977554.ch48},
  bibsource    = {dblp computer science bibliography, https://dblp.org}
}

@article{Piddock25,
  author       = {Stephen Piddock},
  title        = {Quantum Max-Cut is {NP} hard to approximate},
  journal      = {CoRR},
  volume       = {abs/2510.07995},
  year         = {2025},
  url          = {https://doi.org/10.48550/arXiv.2510.07995},
  doi          = {10.48550/arXiv.2510.07995},
  eprinttype   = {arXiv},
  bibsource    = {dblp computer science bibliography, https://dblp.org}
}

@inproceedings{GharibianP19,
  author       = {Sevag Gharibian and
                  Ojas Parekh},
  editor       = {Dimitris Achlioptas and
                  L{\'{a}}szl{\'{o}} A. V{\'{e}}gh},
  title        = {Almost Optimal Classical Approximation Algorithms for a Quantum Generalization
                  of Max-Cut},
  booktitle    = {Approximation, Randomization, and Combinatorial Optimization. Algorithms
                  and Techniques, {APPROX/RANDOM} 2019, Massachusetts Institute of Technology,
                  Cambridge, MA, USA, September 20-22, 2019},
  series       = {LIPIcs},
  pages        = {31:1--31:17},
  publisher    = {Schloss Dagstuhl - Leibniz-Zentrum f{\"{u}}r Informatik},
  year         = {2019},
  url          = {https://doi.org/10.4230/LIPIcs.APPROX-RANDOM.2019.31},
  doi          = {10.4230/LIPIcs.APPROX-RANDOM.2019.31},
  bibsource    = {dblp computer science bibliography, https://dblp.org}
}

@article{ParekhT22,
  author       = {Ojas Parekh and Kevin Thompson},
  title        = {An Optimal Product-State Approximation for 2-Local Quantum Hamiltonians
                  with Positive Terms},
  journal      = {CoRR},
  volume       = {abs/2206.08342},
  year         = {2022},
  url          = {https://doi.org/10.48550/arXiv.2206.08342},
  doi          = {10.48550/arXiv.2206.08342},
  eprinttype   = {arXiv},
  eprint       = {2206.08342},
  archivePrefix = {arXiv},
  primaryClass = {quant-ph}
}

@inproceedings{Lee22,
  author       = {Eunou Lee},
  title        = {Optimizing Quantum Circuit Parameters via {SDP}},
  booktitle    = {33rd International Symposium on Algorithms and Computation
                  ({ISAAC} 2022)},
  pages        = {48:1--48:16},
  series       = {Leibniz International Proceedings in Informatics ({LIPIcs})},
  volume       = {248},
  publisher    = {Schloss Dagstuhl -- Leibniz-Zentrum f{\"{u}}r Informatik},
  year         = {2022},
  url          = {https://drops.dagstuhl.de/entities/document/10.4230/LIPIcs.ISAAC.2022.48},
  doi          = {10.4230/LIPIcs.ISAAC.2022.48}
}

@article{King23,
  author       = {Robbie King},
  title        = {An Improved Approximation Algorithm for Quantum Max-Cut},
  journal      = {Quantum},
  volume       = {7},
  pages        = {1180},
  year         = {2023},
  url          = {https://doi.org/10.22331/q-2023-11-09-1180},
  doi          = {10.22331/q-2023-11-09-1180},
  eprinttype   = {arXiv},
  eprint       = {2209.02589},
  archivePrefix = {arXiv},
  primaryClass = {quant-ph}
}

@inproceedings{LeeP24,
  author       = {Eunou Lee and Ojas Parekh},
  title        = {An Improved Quantum Max Cut Approximation via Maximum Matching},
  booktitle    = {51st International Colloquium on Automata, Languages, and
                  Programming ({ICALP} 2024)},
  pages        = {105:1--105:11},
  series       = {Leibniz International Proceedings in Informatics ({LIPIcs})},
  volume       = {297},
  publisher    = {Schloss Dagstuhl -- Leibniz-Zentrum f{\"{u}}r Informatik},
  year         = {2024},
  url          = {https://drops.dagstuhl.de/entities/document/10.4230/LIPIcs.ICALP.2024.105},
  doi          = {10.4230/LIPIcs.ICALP.2024.105}
}

@article{HuberTPG24,
  author       = {Felix Huber and Kevin Thompson and Ojas Parekh and Sevag Gharibian},
  title        = {Second Order Cone Relaxations for Quantum Max Cut},
  journal      = {CoRR},
  volume       = {abs/2411.04120},
  year         = {2024},
  url          = {https://doi.org/10.48550/arXiv.2411.04120},
  doi          = {10.48550/arXiv.2411.04120},
  eprinttype   = {arXiv},
  eprint       = {2411.04120},
  archivePrefix = {arXiv},
  primaryClass = {quant-ph}
}

@article{JorqueraKKSW24,
  author       = {Zackary Jorquera and Alexandra Kolla and Steven Kordonowy and
                  Juspreet Singh Sandhu and Stuart Wayland},
  title        = {Monogamy of Entanglement Bounds and Improved Approximation Algorithms
                  for Qudit Hamiltonians},
  journal      = {CoRR},
  volume       = {abs/2410.15544},
  year         = {2024},
  url          = {https://doi.org/10.48550/arXiv.2410.15544},
  doi          = {10.48550/arXiv.2410.15544},
  eprinttype   = {arXiv},
  eprint       = {2410.15544},
  archivePrefix = {arXiv},
  primaryClass = {quant-ph}
}

@article{JuN25,
  author       = {Nathan Ju and Ansh Nagda},
  title        = {Improved approximation algorithms for the {EPR} Hamiltonian},
  journal      = {CoRR},
  volume       = {abs/2504.10712},
  year         = {2025},
  url          = {https://doi.org/10.48550/arXiv.2504.10712},
  doi          = {10.48550/arXiv.2504.10712},
  eprinttype   = {arXiv},
  eprint       = {2504.10712},
  archivePrefix = {arXiv},
  primaryClass = {quant-ph}
}

@article{GriblingSS25,
  author       = {Sander Gribling and Lennart Sinjorgo and Renata Sotirov},
  title        = {Improved Approximation Ratios for the Quantum Max-Cut Problem on
                  General, Triangle-Free and Bipartite Graphs},
  journal      = {CoRR},
  volume       = {abs/2504.11120},
  year         = {2025},
  url          = {https://doi.org/10.48550/arXiv.2504.11120},
  doi          = {10.48550/arXiv.2504.11120},
  eprinttype   = {arXiv},
  eprint       = {2504.11120},
  archivePrefix = {arXiv},
  primaryClass = {quant-ph}
}

@inproceedings{ApteLMPS25,
  author       = {Anuj Apte and Eunou Lee and Kunal Marwaha and Ojas Parekh and
                  James Sud},
  title        = {Improved Algorithms for Quantum MaxCut via Partially Entangled
                  Matchings},
  booktitle    = {33rd Annual European Symposium on Algorithms ({ESA} 2025)},
  pages        = {101:1--101:14},
  series       = {Leibniz International Proceedings in Informatics ({LIPIcs})},
  volume       = {351},
  publisher    = {Schloss Dagstuhl -- Leibniz-Zentrum f{\"{u}}r Informatik},
  year         = {2025},
  url          = {https://drops.dagstuhl.de/entities/document/10.4230/LIPIcs.ESA.2025.101},
  doi          = {10.4230/LIPIcs.ESA.2025.101}
}

@article{ApteLMPSS25,
  author       = {Anuj Apte and Eunou Lee and Kunal Marwaha and Ojas Parekh and
                  Lennart Sinjorgo and James Sud},
  title        = {A 0.8395-approximation algorithm for the {EPR} problem},
  journal      = {CoRR},
  volume       = {abs/2512.09896},
  year         = {2025},
  url          = {https://doi.org/10.48550/arXiv.2512.09896},
  doi          = {10.48550/arXiv.2512.09896},
  eprinttype   = {arXiv},
  eprint       = {2512.09896},
  archivePrefix = {arXiv},
  primaryClass = {quant-ph}
}

@article{AptePS25,
  title={Conjectured Bounds for 2-Local Hamiltonians via Token Graphs},
  author={Anuj Apte and Ojas Parekh and James Sud},
  journal={ArXiv},
  year={2025},
  volume={abs/2506.03441},
  url={https://api.semanticscholar.org/CorpusID:279154857}
}

@book{BrouwerH12,
  author    = {Andries E. Brouwer and Willem H. Haemers},
  title     = {Spectra of Graphs},
  series    = {Universitext},
  publisher = {Springer},
  address   = {New York},
  year      = {2012},
  doi       = {10.1007/978-1-4614-1939-6},
  isbn      = {978-1-4614-1939-6}
}

@article{HaemersMTR10,
  author  = {Willem H. Haemers and Ali Mohammadian and Behruz Tayfeh-Rezaie},
  title   = {On the sum of Laplacian eigenvalues of graphs},
  journal = {Linear Algebra and its Applications},
  volume  = {432},
  number  = {9},
  pages   = {2214--2221},
  year    = {2010},
  doi     = {10.1016/j.laa.2009.03.038},
  url     = {https://doi.org/10.1016/j.laa.2009.03.038}
}

@article{Lew26,
  title={An Approximate Version of Brouwer's Laplacian Conjecture},
  author={Lew, Alan},
  journal={arXiv preprint arXiv:2601.17575},
  year={2026},
  doi={10.48550/arXiv.2601.17575},
  url={https://arxiv.org/abs/2601.17575},
  eprint={2601.17575},
  archivePrefix={arXiv},
  primaryClass={math.CO}
}

@article {AM85,
    AUTHOR = {Anderson, Jr., William N. and Morley, Thomas D.},
     TITLE = {Eigenvalues of the {L}aplacian of a graph},
   JOURNAL = {Linear and Multilinear Algebra},
  FJOURNAL = {Linear and Multilinear Algebra},
    VOLUME = {18},
      YEAR = {1985},
    NUMBER = {2},
     PAGES = {141--145},
      ISSN = {0308-1087,1563-5139},
   MRCLASS = {05C50},
  MRNUMBER = {817657},
MRREVIEWER = {Torrence\ D.\ Parsons},
       DOI = {10.1080/03081088508817681},
       URL = {https://doi.org/10.1080/03081088508817681},
}

@article {Fiedler74,
    AUTHOR = {Fiedler, Miroslav},
     TITLE = {Additive compound matrices and an inequality for eigenvalues
              of symmetric stochastic matrices},
   JOURNAL = {Czechoslovak Math. J.},
  FJOURNAL = {Czechoslovak Mathematical Journal},
    VOLUME = {24(99)},
      YEAR = {1974},
     PAGES = {392--402},
      ISSN = {0011-4642},
   MRCLASS = {15A42},
  MRNUMBER = {347858},
MRREVIEWER = {Marvin\ Marcus},
}

@book {MarshallOA11,
    AUTHOR = {Marshall, Albert W. and Olkin, Ingram and Arnold, Barry C.},
     TITLE = {Inequalities: theory of majorization and its applications},
    SERIES = {Springer Series in Statistics},
   EDITION = {Second},
 PUBLISHER = {Springer, New York},
      YEAR = {2011},
     PAGES = {xxviii+909},
      ISBN = {978-0-387-40087-7},
   MRCLASS = {26-02 (05-02 26D15 26D20 60E15)},
  MRNUMBER = {2759813},
       DOI = {10.1007/978-0-387-68276-1},
       URL = {https://doi.org/10.1007/978-0-387-68276-1},
}
\endgroup

\end{document}